\newcommand{\citeay}[1]{\citeauthor{#1} \citeyear{#1}}
\g@addto@macro{\UrlBreaks}{\UrlOrds}
\newcommand{\OUT}[1]{}
\begin{document}

\TITLE{An Implementation Relaxation Approach to Principal-Agent Problems}
\ARTICLEAUTHORS{	
\AUTHOR{Hang Jiang}
\AFF{National University of Singapore, School of Computing, Department of Information Systems and Analytics,  \EMAIL{jianghang@u.nus.edu}}		
}

\ABSTRACT{
The classic first-order approach (FOA) relaxes the principal-agent problem by replacing the incentive compatibility (IC) constraint with its first-order condition. We show that FOA is not a valid relaxation when the support of the outcome distribution shifts with the agent’s effort, as in well-studied additive-noise models. In such cases, the optimal effort may occur at a kink point that the first-order condition cannot capture, causing FOA to miss optimal contracts, including the widely adopted bonus schemes. Motivated by this limitation, we introduce the Implementation Relaxation Approach (IRA), which accommodates non-differentiable optima and is straightforward to apply across settings. Rather than directly relaxing IC, IRA relaxes the set of implementable agent efforts and utilities, reducing the problem to identifying the effort-utility pair from which the optimal contract can be constructed. This inverse perspective is particularly convenient for analyzing simple contracts. Using IRA, we derive an optimality condition for quota-bonus contracts that is more general than FOA-based conditions, including those established in the literature under fixed-support assumptions. This also fills a gap where the optimality of quota-bonus contracts in shifting-support settings has been examined only under endogenous assumptions, and it highlights the broader applicability of IRA as a methodological tool.}

\KEYWORDS{moral hazard, relaxation methods, quota-bonus contracts, optimality conditions}
\maketitle	
\RUNAUTHOR{}
\RUNTITLE{}

\section{Introduction}\label{intro}
Moral hazard principal-agent models have been studied for decades and form a cornerstone of incentive and contract design under information asymmetry and outcome uncertainty. At their core, these models take the form of a bilevel optimization problem: a principal seeks to maximize expected utility by choosing an optimal contract subject to two standard constraints---incentive compatibility (IC), which ensures that the agent selects an effort level that maximizes utility given the contract, and individual rationality (IR), which guarantees that the agent’s expected utility is no less than an exogenously specified reservation utility.

In the canonical framework with a continuum of efforts \citep{holmstrom1979moral}, the first-order approach (FOA) is commonly employed: it replaces the intractable IC constraint with its first-order condition, thereby yielding a relaxed problem. Similar to other relaxation methods in optimization \citep{gorry1972relaxation}, the FOA is designed to enlarge the constraint set over which the objective is maximized, preserving all feasible contracts in the original problem \citep{rogerson1985first}. Extensive exogenous conditions have been proposed to guarantee that the agent’s expected utility as a function of effort under the optimal contract is (quasi-) concave, strictly unimodal, or decreasing-then-increasing, thereby allowing the analysis to focus solely on the relaxed problem without losing optimal contracts \citep{rogerson1985first, jewitt1988justifying, conlon2009two,  jung2015information, chade2020no, jiang2024revisiting}.

However, these validations of the FOA rely invariably on a fixed‐support assumption: the support of the outcome distribution remains unchanged with effort. This raises the question of how the FOA performs once this fixed-support dependence is relaxed. In contrast, several influential studies \citep[e.g.,][]{oyer2000theory, dai2016impact} focus on additive-noise models, where outcomes take the form of effort plus a non-negative random noise term.\footnote{\citet{oyer2000theory} has inspired a substantial line of research in Marketing and Operations Management on sales-force compensation, for example, \cite{chu2013salesforce} and \cite{dai2013salesforce,dai2016impact,dai2019salesforce}. Beyond sales-force contexts, shifting support can also arise in other practices. For example, when a secretary exerts effort to screen out weak candidates, the resulting distribution of applicants is left-truncated.} This specification implies a class of \textit{shifting‐support settings} where higher effort raises the lower bound of outcomes. Yet these studies invoke the FOA and restrict attention to the relaxed problem, without specifying the conditions under which it preserves the optimality of the simple contracts (e.g., quota-bonus) under study. This leaves open whether the FOA remains a reliable relaxation in shifting‐support settings, and whether strictly unimodal agent utility continues to suffice for validating it, as in fixed-support settings.

This paper revisits a canonical principal-agent problem where both the principal and agent are risk
neutral, and the agent has limited liability.
In Section~\ref{IFOAV}, we show that in shifting‐support settings, the FOA is not a valid relaxation. We construct a counterexample within the additive-noise framework in which a simple quota-bonus contract is optimal (thus feasible) for the original problem but infeasible (and suboptimal) in the FOA‐based relaxed problem, despite the agent’s expected utility being strictly unimodal. The failure arises because, in shifting‐support settings, the agent’s expected utility under the optimal contract may not be everywhere differentiable, despite all other assumptions coinciding with those in the fixed-support models studied in the literature, as in \citet{holmstrom1979moral}. In our counterexample, the agent's optimal effort level lies at a kink point, a non‐differentiable interior optimum that cannot be captured by the first-order condition, and thus by the FOA. (By contrast, prior studies have documented FOA failures arising either from boundary solutions---for example, Example 2 in \citeay{ke2018monotonicity}---or from non-optimal stationary points that occur when the agent’s utility is non-unimodal, as in \citeay{jiang2024revisiting}.) Moreover, in this example, the FOA erroneously suggests that quota-bonus contracts are strictly dominated by linear contracts, whereas in fact the reverse holds. This divergence further underscores that reliance on the FOA can mislead commission-versus-bonus comparisons, a common focus in the sales‐force compensation literature \citep[e.g.,][]{kishore2013bonuses, chung2014bonuses, schottner2017optimal}.

These limitations substantially diminish the FOA’s practical relevance---particularly given the prevalence and simplicity of quota-bonus contracts \citep{oyer2000theory}. Alternative approaches for principal-agent problems have been proposed in the literature; however, in addition to their reliance on fixed-support assumptions, as noted by \citet{jiang2024revisiting}, they either focus on strictly risk-averse agents \citep[e.g.,][]{ke2018monotonicity}, sacrifice optimality guarantees \citep{wang2021moral}, or restrict analysis to discrete cases \citep{grossmanhart, dai2021incentive}. The added complication of shifting support may further limit their applicability. This highlights the need for new methods capable of addressing the challenges posed by non-differentiability.

In Section~\ref{IRA}, we introduce the Implementation Relaxation Approach (IRA)---a genuine relaxation that accommodates kink optima (as well as stationary optima), preserves the full feasible set of the original problem, and retains both generality and applicability. Instead of purporting to directly relax the IC constraint, the IRA relaxes the \textit{implementable set},\footnote{The term ``implementable set" also appears in \citet{jiang2024revisiting}, who focus on implementation in the relaxed problem. To the best of my knowledge, the implementation idea can be traced back to in \citet{grossmanhart}, who focuses solely on implementing the agent’s effort, whereas here we emphasize implementation of both effort and utility and propose a relaxation.} the set of all implementable effort-utility pairs (combinations of agent effort and agent utility). Since this set is endogenous and difficult to characterize, the IRA approximates it with a tractable, exogenously defined superset. We then identify the optimal effort-utility pair from this superset that maximizes the principal's expected payoff; if this pair also belongs to the original implementable set, the true optimal contracts are subsequently recovered from this pair.

An immediate application of the IRA is to establish optimality conditions for simple contracts, as verifying whether a contract of a given parametric form implements a specific effort-utility pair reduces to a straightforward parameter verification problem. In Section~\ref{OCQBC}, we employ the IRA to derive the exogenous optimality condition for quota-bonus contracts. Our condition applies to both fixed‐ and shifting‐support settings, thereby filling another gap left by \citet{oyer2000theory}, who established the optimality of quota-bonus contracts in shifting-support settings via the FOA but offered no exogenous conditions to substantiate it.


Moreover, our condition guarantees the existence of an optimal contract, whereas FOA-based conditions merely impose local shape restrictions on the agent’s utility to justify applying FOA and do not address existence \citep{jiang2024revisiting}. Beyond this, we show our exogenous condition is even more general than the endogenous FOA-validity condition in characterizing quota-bonus contract optimality, as it encompasses all cases in which a quota-bonus contract is optimal for both the original problem and its FOA-based relaxed problem. This further implies that our condition subsumes all FOA-based optimality conditions for quota-bonus contracts, including those developed under fixed-support settings \citep{park1995incentive, kim1997limited, jiang2024revisiting}. Taken together, these results highlight the IRA's broader applicability as a methodological tool.

In sum, our contributions are as follows. First, we identify a precise relaxation failure of the FOA caused by non-differentiability—its exclusion of kink optima and the resulting loss of optimal (quota-bonus) contracts in shifting-support settings. Second, we introduce a new and genuine relaxation approach, the Implementation Relaxation Approach (IRA), which accommodates non-differentiable scenarios and can be readily applied across settings to derive optimality conditions for simple contracts. Third, we apply the IRA to obtain optimality conditions for quota-bonus contracts that are valid in both fixed- and shifting-support settings, thereby filling a gap left by the absence of exogenous conditions in the shifting-support case. Our condition not only guarantees the existence of optimal contracts, but is also strictly more general than any FOA-based optimality conditions for quota-bonus contracts.

\section{The Model}\label{model}
A principal (she) engages an agent (he) to execute a task. The agent selects an effort level $a \geq 0$ to influence the outcome $X(a)$, a continuous random variable with support $[L(a),\, \bar{x}]$, where $L(a)$ is the lower bound of the support and $\bar{x} > 0$ is the exogenously given upper bound, which may be $+\infty$.
The function $L : [0, \infty) \to [0, \bar{x})$ is nondecreasing on $[0, \infty)$ and continuously differentiable on $(0, \infty)$. Effort $a$ guarantees the realized outcome at least $L(a)$. The agent’s effort $a$ incurs disutility $c(a)$, where function $c\colon[0,\infty)\to[0,\infty)$ is twice differentiable and satisfies $c'(a)>0$, $c''(a)\ge 0$ for all $a \geq 0$. The outcome \( X(a) \) has a probability density function \( f(x \mid a) \) and a cumulative distribution function $F(x\mid a)$. Moreover, for each fixed $x\in(L(a),\bar x]$, both $f(x\mid a)$ and $F(x\mid a)$ are differentiable in $a>0$, with derivatives $F_a(x\mid a)=\partial_aF(x\mid a)$ and $f_a=\partial_af(x\mid a)$. We assume both $F_a$ and $f_a$ are continuous on $(L(a),\bar x] \times (0, \infty)$, and $F_a<0$ on this domain,  indicating that higher effort leads to a first-order stochastic improvement in the output.

All model primitives are common knowledge, except that the agent’s effort 
$a$ is privately observed by the agent. Consequently, the principal, being unable to observe $a$, designs a contract based solely on the realized outcome \(x\). The contract satisfies \textit{limited liability} and is specified by a function $s: \mathbb{R}_{\geq 0} \to \mathbb{R}_{\geq 0}$, where $s(x) \geq 0$ for all $x \geq 0$. Both the principal and the agent are risk-neutral, and the agent has a (exogenously given) \textit{reservation utility} $u_0 \geq 0$.
 

Under contract $s$ and effort level $a$, the principal’s expected utility is
\begin{equation}\label{principalpayoff}
    E^P(a, s) \;=\; \mathbb{E}[\,X(a) \mid a \,] - \mathbb{E}[\,s(X(a)) \mid a \,] = \; \int_{L(a)}^{\bar x} \bigl[x - s(x)\bigr]\,f(x \mid a)\,dx ,
\end{equation}
and the agent’s expected utility is
\begin{equation}\label{agentpayoff}
    E^A(a, s) \;=\; \mathbb{E}[\,s(X(a)) \mid a \,] - c(a) = \int_{L(a)}^{\bar x} s(x)\,f(x \mid a)\,dx \;-\; c(a).
\end{equation}

The game proceeds as follows: first, the principal offers a contract $s(x)$ that satisfies the Limited Liability (LL) constraint, i.e., $s(x) \geq 0, \, \forall x\geq 0$. The agent then selects an effort level that maximizes his expected utility, known as the Incentive Compatibility (IC) constraint. The agent accepts the contract if and only if another Individual Rationality (IR) constraint is satisfied: the agent's expected utility under the chosen effort is at least  \(u_0\). Upon accepting the contract, the agent receives compensation \(s(x)\) after the outcome is realized, and the principal retains a payoff of \(x - s(x)\).

The principal chooses an optimal contract that satisfies IC, IR, and LL, as formulated below:
\[
\begin{alignedat}{3}
\textbf{(OP)}\quad & \max_{ a_s,\; s} && E^P( a_s,s),\\[2pt]
\text{subject to}\quad 
& \text{(IC)} \ \ \; &&  a_s \in \arg\max_{a\ge 0} E^A(a,s),\\
& \text{(IR)} \ \ \; && E^A( a_s,s) \ge u_0,\\
& \text{(LL)}\ \ \; && s(x) \ge 0, \ \ \forall x\geq 0.
\end{alignedat}
\]

Here, \textbf{OP} denotes the principal’s original problem.
Note that without the limited liability constraint, the principal could simply sell the business to the agent or impose severe penalties whenever \(x < L(a)\) to ensure that the agent’s effort is at least \(a\). Such contracts are unrealistic, as agents typically lack deep pockets and contracts often include liability caps \citep{sappington1983limited}. Thus, the limited liability constraint is essential for maintaining both the realism and nontriviality of the model. Moreover, our model encompasses both fixed‐ and shifting‐support settings. Formally,
\begin{definition}\label{shiftingsupport}
We say the principal's problem is in the \emph{fixed‐support setting} if \(L(a)\) is constant over \(a > 0\). By contrast, it is in the \emph{shifting‐support setting} if \(L(a) > L(a')\) for some \(a > a'> 0\).
\end{definition}

We assume that when the agent has multiple indifferent choices, he chooses the one most favorable to the principal. We also impose the following mild regularity assumption.
\begin{assumption}\label{ass:interchange} 
The following conditions hold:
\begin{itemize}
    \item[(i)] For all \(a \ge 0\), \(\mathbb{E}[X(a)\mid a] < \infty\), and the mapping 
\(a \mapsto \mathbb{E}[X(a)\mid a]\) is continuous on \([0,\infty)\).
    \item[(ii)] \(\lim_{a \to \infty} \left\{\mathbb{E}[X(a) \mid a] - c(a)\right\} = -\infty\).
    \item[(iii)] The mapping \(a \mapsto \sup_{x\in(L(a),\,\bar x]} \frac{f_a(x\mid a)}{f(x\mid a)} \text{ is continuous on } (0,\infty)\).
    \item[(iv)] For any contract \(s\) with \(\mathbb{E}[s(X(a)) \mid a] < \infty\), the following interchange is valid:
    \[
    \lim_{h \uparrow 0}
    \int_{L(a)}^{\bar x}
    \frac{f(x\mid a+h)-f(x\mid a)}{h}\,s(x)\,\mathrm{d}x
    \;=\;
    \int_{L(a)}^{\bar x}
    \lim_{h \uparrow 0} \frac{f(x\mid a+h)-f(x\mid a)}{h}\,s(x)\,\mathrm{d}x.
    \]
\end{itemize}
\end{assumption}

In assumption \ref{ass:interchange}, parts (i) and (iii) ensure that the expected outcome and the supremum likelihood ratio vary continuously with effort, and part (ii) ensures that the optimal effort is finite \citep{jiang2024revisiting}. Part (iv) justifies the interchange of the limit and the integral. Specifically, literature using the first-order approach writes \(\frac{\partial}{\partial a}\!\int s(x) f(x\mid a)\,dx = \int s(x)\, f_a(x\mid a)\,dx\)  \citep{holmstrom1979moral, jewitt1988justifying}.
This step implicitly assumes part (iv)—namely, that the limit in the difference quotient may be interchanged with the integral. A standard justification for the interchange is provided by the Dominated Convergence Theorem. For example, a simple sufficient condition is that, for each fixed $a$, the likelihood ratio is bounded, i.e., $|\frac{f_a(x\mid a)}{f(x\mid a)}|\le M$ for a.e.\ $x$ within the support (a local version of the uniformly bounded likelihood-ratio assumption used by \citeay{kadan2013minimum}).\footnote{Bounded likelihood ratio is sufficient for assumption \ref{ass:interchange}-(iv). Given that $|\frac{f_a(x\mid a)}{f(x\mid a)}|\le M$ for a.e.\ $x$ within the support, for all sufficiently small $|h|$,
\(\big|\frac{f(x\mid a+h)-f(x\mid a)}{h}\big|\le 2M\,f(x\mid a)\ \text{a.e.}\), so the integrand is bounded by
\(
\big|\,s(x)\,\frac{f(x\mid a+h)-f(x\mid a)}{h}\,\big|
\le 2M\,s(x)\,f(x\mid a).
\)
Since $\mathbb{E}[\,s(X(a))\,]=\int s(x)\,f(x\mid a)\,dx<\infty$, dominated convergence applies and yields
\(
\lim_{h\uparrow 0}\int s(x)\,\frac{f(x\mid a+h)-f(x\mid a)}{h}\,dx
= \int s(x)\,f_a(x\mid a)\,dx 
\), where we extend $f(\cdot\mid a)$ by zero outside its support.}

Before we proceed, we introduce the \textit{first-best} benchmark, in which the principal has full information and directly chooses the first-best effort \(a_{FB}\) that maximizes her expected payoff. 
\begin{definition}[First Best]\label{FB}
The first-best effort, denoted by \(a_{FB}\), is the effort that maximizes the social surplus $\mathbb{E}\bigl[X(a)\mid a\bigr] \;-\; c(a).$
In \textbf{OP}, if there exists a contract \(s(x)\) such that the agent optimally chooses \(a_{FB}\) and obtains expected utility \(E^A(a_{FB},s)=u_0\), then we say that contract \(s(x)\) achieves the first best.
\end{definition}
By Assumption \ref{ass:interchange}-(ii), we always have $a_{FB} < +\infty$. Moreover, by equations \eqref{principalpayoff} and \eqref{agentpayoff}, and the IR constraint  \(E^A(a, s) \geq u_0\), the principal's expected payoff
\[
\begin{aligned}
    E^P(a, s) &=  \mathbb{E}[\,X(a) \mid a\,] - c(a) - E^A(a, s)
    \\&  \leq \mathbb{E}[\,X(a_{FB}) \mid a_{FB}\,] - c(a_{FB}) - u_0.
\end{aligned}
\]
Thus, \(E^P(a, s)\) is bounded above by the first-best payoff, \(\mathbb{E}[\,X(a_{FB}) \mid a_{FB}\,] - c(a_{FB}) - u_0\), with equality if and only if \(s\) achieves the first best. Any contract that achieves the first best is therefore optimal. 


\section{Limitations of the First-Order Approach as a Relaxation}\label{IFOAV}

Because the IC constraint entails infinitely many inequalities, it is typical to employ the first-order approach (FOA), replacing IC with its first-order condition (RIC, below) to obtain the relaxed problem (\textbf{RP}):
\[
\begin{alignedat}{3}
\textbf{(RP)}\quad & \max_{ a_s,\; s} && E^P( a_s,s),\\[2pt]
\text{subject to}\quad 
& \text{(RIC)} \ \ \; &&   \frac{\partial E^A(a_s, s)}{\partial a} = 0,\\
& \text{(IR)} \ \ \; && E^A( a_s,s) \ge u_0,\\
& \text{(LL)}\ \ \; && s(x) \ge 0, \ \ \forall x\geq 0.
\end{alignedat}
\]

The FOA is intended to construct a relaxation of the original problem by enlarging the constraint set over which the objective is maximized \citep{rogerson1985first}, akin to other relaxation methods in optimization \citep{fisher1981lagrangian, rockafellar2015convex}. A defining feature of a relaxation is that it preserves all feasible solutions  \citep{gorry1972relaxation}. In fixed‐support settings, this relaxation property always holds and one can focus solely on the relaxed problem—without loss of any optimal contract—when the agent’s expected utility under the optimal contract is constrained to be concave or strictly unimodal in effort by exogenous conditions \citep{rogerson1985first, conlon2009two, jiang2024revisiting}.

In shifting‐support settings, however, the FOA no longer provides a genuine relaxation of the original problem. As we demonstrate in the counterexample below, a quota-bonus contract that is optimal for the \textbf{OP} may not even be feasible for the \textbf{RP}, despite the agent’s expected utility under the optimal contract being strictly unimodal in effort.

\begin{example}\label{exp1}
Let the outcome be $X = a + \xi,\ \xi\sim\mathrm{Exp}(1)$, so that 
\[f(x\mid a) \;=\;
\begin{cases}
e^{-(x - a)}, & x \ge a,\\
0, & x < a.
\end{cases}\] The agent chooses effort \(a \geq 0\) and incurs cost 
$c(a)=\frac{a^2}{2}\,.$ The agent's reservation utility is \(u_0=1\).
\end{example}

This example fits exactly within the framework of \citet{oyer2000theory}. In such a simple setting, one might expect that applying the FOA—i.e., solving the \textbf{RP}—would yield  the familiar result that a quota-bonus contract is optimal, as in \citet{park1995incentive}, \citet{kim1997limited}, \citet{oyer2000theory}, and \citet{jiang2024revisiting}. However, applying the FOA in this case instead leads to the opposite conclusion: quota-bonus contracts are, in fact, not even feasible.

Specifically, consider any quota-bonus contract $s(x) = b\, \mathbf{1}_{\{ x \geq q \}}$ with quota $q > 0$ and bonus $b > 0$, where $\mathbf{1}_{\{\cdot\}}$ denotes the indicator function. Suppose that under \textbf{RP}, the agent selects effort $a$ given $s$, and that $a$ and $s$ satisfy the RIC and IR constraints. The agent's expected utility is then given by
\[
E^A(a, s) = \int_0^\infty b\, \mathbf{1}_{\{x \geq q\}}\, e^{-(x - a)}\, \mathbf{1}_{\{x \geq a\}}\, dx - \frac{1}{2} a^2 =
\begin{cases}
b - \tfrac{1}{2} a^2, & a \geq q, \\[2mm]
b\, e^{-(q-a)} - \tfrac{1}{2} a^2, & a < q.
\end{cases}
\]
Since $a$ and $s$ must satisfy RIC, we have $a < q$, because otherwise the right-hand derivative is $\frac{\partial^+}{\partial a} E^A(a, s) = \frac{\partial^+}{\partial a} (b - \tfrac{1}{2} a^2) = -a \leq -q < 0$. For $a < q$, the RIC constraint is equivalent to
\[
\frac{\partial E^A(a, s)}{\partial a} = b\, e^{-(q-a)} - a = 0,
\]
and the IR constraint is equivalent to
\[
b\, e^{-(q-a)} - \tfrac{1}{2} a^2 \;\ge\; u_0 = 1.
\]
Combining these yields a contradiction, i.e.,
\[
a - \tfrac{1}{2} a^2 \;\ge\; 1 \;\iff\;
-\frac{1}{2}(a - 1)^2  \;\ge\;  \frac{1}{2},
\]
which is impossible for any $a \geq 0$.
This implies that \textbf{RP} admits no feasible quota-bonus contracts. Assuming the FOA to be a valid relaxation, one would conclude that quota-bonus contracts are also infeasible in \textbf{OP} and therefore suboptimal, being dominated by linear contracts such as $s_L(x) = (\sqrt{3}-1)x$ (note that \(E^A(a,s_L)=(\sqrt{3}-1)(a+1)-a^2/2\) attains the unique maximum \(1\) at the stationary point \(a_L=\sqrt{3}-1\) and $(a_L,s_L)$ satisfies RIC, IC, IR, and LL constraints; thus \(s_L\) is feasible in both \textbf{RP} and \textbf{OP} and yields \(E^P(a_L,s_L)=2\sqrt{3}-3\)).

However, we will show that the conclusion reached under the FOA is incorrect. Consider the quota-bonus contract
$\hat{s}(x) = \frac{3}{2}\;\mathbf{1}\{x \ge 1\}$, with quota $1$ and bonus $\frac{3}{2}$.
Under this contract, the agent’s expected utility is 
\begin{equation} \label{eq:agent-utility}
E^A(a,\hat{s}) = \;\int_0^\infty \frac{3}{2} \cdot \mathbf{1}_{\{x \ge 1\}} \cdot  \,e^{-(x - a)} \cdot \mathbf{1}_{\{x \ge a\}} \,dx - \frac{1}{2}a^2 = 
\begin{cases}
\tfrac{3}{2} - \tfrac{1}{2}a^2, & a \ge 1,\\[2mm]
\tfrac{3}{2}e^{-(1-a)} - \tfrac{1}{2}a^2, & a < 1.
\end{cases}
\end{equation}
It is straightforward to verify that $E^A(a,\hat{s})$ is strictly unimodal and attains its unique maximum value $1$ at $\hat{a}=1$. Thus $(\hat{a},\hat{s})$ satisfies IC but violates the RIC, since the right-hand derivative $\frac{\partial^+}{\partial a} E^A(\hat{a}, \hat{s}) = -\hat{a}= -1 \neq 0$, implying that $\hat a$ is not a stationary point. Moreover, the social surplus $\mathbb{E}[X(a)\mid a]-c(a)=a+1-\tfrac{1}{2}a^{2}$ is maximized at $a=1$, so $\hat{a}$ is the first best effort. Because $\hat{s}$ implements the first-best effort while binding participation $E^A(\hat{a},\hat{s})=1 = u_{0}$, it solves \textbf{OP} optimally with principal payoff $E^P(\hat{a},\hat{s})=\tfrac{1}{2}$ , even though it is infeasible under \textbf{RP}.

Moreover, it can be readily verified that the optimal (limited-liability) linear contract is \(s_L = (\sqrt{3}-1)x\).\footnote{In Example~\ref{exp1}, the optimal (limited-liability) linear contract is \(s_L = (\sqrt{3}-1)x\). Consider \(s'(x)=\alpha+\beta x\) with \(\alpha,\beta \geq 0\): for any \((\alpha,\beta)\), \(E^A(a,s')=\alpha+\beta(a+1)-a^2/2\) is uniquely maximized at \(a'=\beta\), giving \(E^A(a',s')=\alpha+\beta+\beta^2/2\), so IR implies \(\alpha\ge 1-\beta-\beta^2/2\). The principal’s payoff at \(a'=\beta\) is \(E^P(a',s')=1-\beta^2-\alpha\); choosing the smallest feasible \(\alpha\) yields \(\alpha^*=\max\{0,\,1-\beta-\beta^2/2\}\) and hence \(E^P(\beta)=1-\beta^2-\alpha^*=\min\{\,1-\beta^2,\,\beta-\beta^2/2\,\}\), which is maximized at \(\beta^*=\sqrt{3}-1\) (where the two branches meet) with \(\alpha^*=0\).} However, $s_L$ fails to achieve the first best: it induces a lower effort, \(a_L = \sqrt{3}-1 \approx 0.732 < 1 = \hat{a}\), and yields a lower principal payoff $E^P(a_{L}, s_L) = 2\sqrt{3}-3 \approx 0.464 < 0.5= E^P(\hat{a}, \hat{s})$. Therefore, contrary to the FOA-based conclusion, direct analysis of \textbf{OP} shows that the quota-bonus contract $\hat{s}$ not only attains optimality but also strictly dominates all linear contracts. This divergence indicates that reliance on the FOA can be misleading—particularly for the type of commission-versus-bonus comparisons, a common focus in the sales-force compensation literature \citep[e.g.,][]{kishore2013bonuses,chung2014bonuses,schottner2017optimal}.


This exclusion of the optimal (quota-bonus) contract and the consequent bias occurs because—as shown in Figure~\ref{fig:agent-utility}—the optimal contract  \(\hat{s}(x)\) induces the agent to uniquely select a kink point, an interior, non-differentiable optimum that the FOA fails to capture, despite the agent’s expected utility $E^A(a,\hat{s})$ being strictly unimodal in effort $a$. Therefore, the FOA fails to constitute a genuine relaxation, and its validity for quota-bonus contracts (see Definition \ref{defFOAVQB}) under strict unimodality—a powerful property for solving the principal-agent problem in fixed‐support settings \citep{jiang2024revisiting}—fails to hold in the shifting‐support case. 


These limitations significantly reduce the FOA’s practical relevance and suggest that similar failures may arise for other bonus forms, such as two-tier bonuses \citep{moldovanu2001optimal, wang2021moral}. Motivated by these limitations, we introduce a new, genuine relaxation approach in the next section.

\section{The Implementation Relaxation Approach}\label{IRA}
Combining equations \eqref{principalpayoff} and \eqref{agentpayoff}, the principal’s expected utility can be rewritten as
\[E^P(a,s) =\mathbb{E}[\,X(a) \mid a \,] - c(a) -  E^A(a,s).\]
This reformulation shows that once the agent's effort-utility pair $(a, E^A)$ is given, the principal’s expected payoff is determined. Before proceeding, we formally define the notion of ``implement" and the ``implementable set" below. For notational convenience, we will use $u \equiv E^A$ in what follows.
\begin{definition}[Implementation]\label{def:implementation}
A contract \( s \) is said to \emph{implement} an effort-utility pair \( (a, u) \) if \( (a, s) \) satisfies the IC, IR, and LL constraints and under \( s \), the agent optimally chooses action \( a \geq 0 \) and receives expected utility \( u(a, s) \equiv E^A(a, s) \geq u_0 \). A pair \((a,u)\) is \emph{implementable} if there exists some contract \(s\) that implements it. \emph{The implementable set}, denoted by $\mathcal{I}$, is the collection of all implementable pairs:
\[
  \mathcal{I}
  \;\triangleq\;
  \bigl\{(a,u)\mid (a,u)\text{ is implementable}\bigr\}.
\]
\end{definition}

Given the reformulation above, if the principal can ex ante identify the \textit{optimal implementable pair}, i.e., the effort-utility pair that maximizes her expected payoff across all implementable pairs, then determining the optimal contract reduces to designing any contract that implements this pair.


The primary challenge, then, is how to pinpoint the optimal implementable pair.
Given the implementable set $\mathcal{I}$, the task of finding the optimal implementable pair reduces to the following program:
\begin{equation}\label{true}
    \max_{(a,u) \in \mathcal{I}} \;   \mathbb{E}[\,X(a) \mid a \,] - c(a) - u.
\end{equation}
This formulation reduces the search for the optimal implementable pair to a simple two-dimensional optimization. If $\mathcal{I}$ can be explicitly characterized, identifying the optimal implementable pair becomes straightforward. However, characterizing the implementable set \(\mathcal{I}\) is generally challenging. A naive brute‐force approach, which tests for each \((a,u)\in[0,\infty]\times[u_0,\infty]\) whether some contract implements it, is clearly impractical; it would be no simpler than directly solving the original infinite‐dimensional bi‐level program.

Instead of exhaustively characterizing \(\mathcal{I}\), we introduce the \emph{implementation relaxation approach} (IRA). The central idea of the IRA is that when $\mathcal{I}$ is difficult to determine, one can instead consider a relaxed, simpler superset $\mathcal{I}^{\mathrm{rel}}$ such that 
\[\mathcal{I} \subseteq \mathcal{I}^{\mathrm{rel}}.\]
For a given $\mathcal{I}^{\mathrm{rel}}$, we solve the following relaxed program:
\begin{equation}\label{RP}
\max_{(a,u) \in \mathcal{I}^{\mathrm{rel}}} \; \mathbb{E}[\,X(a) \mid a \,] - c(a) - u, 
\end{equation}
yielding a \textit{relaxed optimal effort-utility pair} $(a^{\mathrm{rel}}, u^{\mathrm{rel}}) \in \mathcal{I}^{\mathrm{rel}}$, i.e., a solution to \eqref{RP}.

Since $\mathcal{I} \subseteq \mathcal{I}^{\mathrm{rel}}$, the pair $(a^{\mathrm{rel}}, u^{\mathrm{rel}})$ yields a principal’s expected payoff at least as high as any effort-utility pair in $\mathcal{I}$. Thus, if $(a^{\mathrm{rel}}, u^{\mathrm{rel}})$ is also implementable, it is precisely the optimal implementable pair we seek.

Therefore, to identify the optimal implementable pair, it suffices to find an optimal pair in a \textit{relaxed implementable set} (i.e., a superset of the implementable set), and then verify that the resulting pair is actually implementable. Once a relaxed optimal effort-utility pair $(a^{\mathrm{rel}}, u^{\mathrm{rel}}) \in \mathcal{I}^{\mathrm{rel}}$ is obtained, it is enough to show that there exists a contract that implements $(a^{\mathrm{rel}}, u^{\mathrm{rel}})$; such a contract is then optimal. In other words, a sufficient condition for a contract to be optimal is that it implements a relaxed optimal effort-utility pair $(a^{\mathrm{rel}}, u^{\mathrm{rel}})$ from a relaxed implementable set $\mathcal{I}^{\mathrm{rel}}$.

The most trivial relaxed implementable set is simply $[0,\,\infty] \times [u_0,\,\infty]$. If we set $\mathcal{I}^{\mathrm{rel}} = [0,\,\infty] \times [u_0,\,\infty]$, the relaxed optimal effort-utility pair is always $(a_{FB},\,u_0)$. This relaxation is innocuous when the first best can be achieved. However, in cases where there is a \textit{loss of efficiency} i.e., first best cannot be achieved, it is overly ``lax" and fails to provide useful information about the optimal implementable pair. Intuitively, the ``tighter" the relaxation, i.e., the smaller the gap between \(\mathcal{I}^{\mathrm{rel}}\) and \(\mathcal{I}\), the more reliably the relaxed program \eqref{RP} will recover the true optimal implementable pair. The following theorem provides a ``tighter" relaxation. 

\begin{theorem}\label{nececondimp}
Define
\[
\mathcal{J} \triangleq 
\left\{ (a, u) \;\middle|\; a > 0,\, u \geq u_0,\, 
\sup_{x \in (L(a),\, \bar{x}]} \frac{f_a(x \mid a)}{f(x \mid a)} \geq \frac{c'(a)}{c(a) + u}
\right\}
\,\cup\,
\left\{ (0, u) \mid u \geq u_0 \right\}.
\]
Then:
\begin{enumerate}
    \item Any implementable effort-utility pair $(a, u)$ belongs to $\mathcal{J}$, so that $\mathcal{I} \subseteq \mathcal{J}$.
    \item There exists $(a, u) \in \mathcal{J}$ that solves
    \begin{equation}\label{RPJ}
        \max_{(a, u) \in \mathcal{I}} \left\{ \mathbb{E}[X(a) \mid a] - c(a) - u \right\}.
    \end{equation}
    \item Let \((a^\mathcal{J}, u^\mathcal{J})\) be a maximizer of \eqref{RPJ}. Then any contract that implements \((a^\mathcal{J}, u^\mathcal{J})\) is optimal for \textbf{OP}.
\end{enumerate}
\end{theorem}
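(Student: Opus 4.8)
I would prove the three claims in the order 1, 3, 2, since Part 3 is just the payoff reformulation and Part 2 leans on it. Write $g(a,u):=\mathbb{E}[X(a)\mid a]-c(a)-u$ and $\Lambda(a):=\sup_{x\in(L(a),\bar x]}f_a(x\mid a)/f(x\mid a)$, so that the constraint defining the ``$a>0$'' part of $\mathcal{J}$ is $\Lambda(a)\,(c(a)+u)\ge c'(a)$, and read \eqref{RPJ} as the relaxed program $\max_{(a,u)\in\mathcal{J}}g(a,u)$ (the instance of \eqref{RP} with $\mathcal{I}^{\mathrm{rel}}=\mathcal{J}$, consistent with the notation $(a^{\mathcal{J}},u^{\mathcal{J}})$ in Part 3). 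For Part 1, take an implementable pair $(a,u)$; the case $a=0$ is immediate from the definition of $\mathcal{J}$, so assume $a>0$ and fix a contract $s$ implementing it. Since $a$ is a global maximizer of $a'\mapsto E^A(a',s)$ on $[0,\infty)$ and is interior, its left-hand derivative at $a$ is $\ge 0$; by Assumption~\ref{ass:interchange}-(iv), passing $\lim_{h\uparrow 0}$ inside the integral with $f(\cdot\mid a')$ extended by zero past its support (which absorbs the moving endpoint $L(a')$), this derivative equals $\int_{L(a)}^{\bar x}s(x)f_a(x\mid a)\,dx-c'(a)$, so $\int_{L(a)}^{\bar x}s(x)f_a(x\mid a)\,dx\ge c'(a)>0$. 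Using $f_a(x\mid a)\le\Lambda(a)f(x\mid a)$ pointwise and $s\ge 0$,
\[
c'(a)\ \le\ \int_{L(a)}^{\bar x}s(x)f_a(x\mid a)\,dx\ \le\ \Lambda(a)\int_{L(a)}^{\bar x}s(x)f(x\mid a)\,dx\ =\ \Lambda(a)\,(c(a)+u);
\]
since $c'>0$ forces $c(a)>c(0)\ge 0$, hence $c(a)+u>0$, dividing gives $\Lambda(a)\ge c'(a)/(c(a)+u)$, i.e.\ $(a,u)\in\mathcal{J}$ (trivially so if $\Lambda(a)=+\infty$). This yields $\mathcal{I}\subseteq\mathcal{J}$.

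\textbf{Part 3.} The identity $E^P(a,s)=\mathbb{E}[X(a)\mid a]-c(a)-E^A(a,s)$ shows every \textbf{OP}-feasible pair $(a_s,s)$ corresponds to the implementable pair $(a_s,E^A(a_s,s))\in\mathcal{I}$ with objective value $g(a_s,E^A(a_s,s))$, and conversely any contract implementing $(a,u)\in\mathcal{I}$ is \textbf{OP}-feasible with payoff $g(a,u)$; hence $\mathrm{val}(\textbf{OP})=\sup_{(a,u)\in\mathcal{I}}g(a,u)$. Now let $(a^{\mathcal{J}},u^{\mathcal{J}})$ maximize $g$ over $\mathcal{J}$ and let a contract $s$ implement it. Then $(a^{\mathcal{J}},u^{\mathcal{J}})\in\mathcal{I}$, so $\mathrm{val}(\textbf{OP})\ge g(a^{\mathcal{J}},u^{\mathcal{J}})$; but $\mathcal{I}\subseteq\mathcal{J}$ gives $g(a^{\mathcal{J}},u^{\mathcal{J}})=\max_{\mathcal{J}}g\ge\sup_{\mathcal{I}}g=\mathrm{val}(\textbf{OP})$, so equality holds throughout and $E^P(a^{\mathcal{J}},s)=g(a^{\mathcal{J}},u^{\mathcal{J}})=\mathrm{val}(\textbf{OP})$, i.e.\ $s$ is optimal for \textbf{OP}.

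\textbf{Part 2.} I would run a maximizing-sequence argument for $\max_{\mathcal{J}}g$. The value $V:=\sup_{\mathcal{J}}g$ is finite: $V\ge g(0,u_0)>-\infty$ since $\{(0,u):u\ge u_0\}\subseteq\mathcal{J}$, and $V\le\mathbb{E}[X(a_{FB})\mid a_{FB}]-c(a_{FB})-u_0<\infty$ by the first-best bound derived in the excerpt (applied with any $u\ge u_0$). Take $(a_n,u_n)\in\mathcal{J}$ with $g(a_n,u_n)\to V$. Assumption~\ref{ass:interchange}-(ii) forces $(a_n)$ to be bounded, since otherwise $g(a_n,u_n)\le\mathbb{E}[X(a_n)\mid a_n]-c(a_n)-u_0\to-\infty$; pass to a subsequence with $a_n\to a^{\ast}$. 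By Assumption~\ref{ass:interchange}-(i) and continuity of $c$, $\mathbb{E}[X(a_n)\mid a_n]-c(a_n)$ converges, so $u_n\to u^{\ast}$ for some $u^{\ast}\ge u_0$, and $g(a^{\ast},u^{\ast})=V$ by continuity of $g$. Finally $(a^{\ast},u^{\ast})\in\mathcal{J}$: if $a^{\ast}=0$ this is immediate; if $a^{\ast}>0$ it follows by letting $n\to\infty$ in $\Lambda(a_n)(c(a_n)+u_n)\ge c'(a_n)$, using Assumption~\ref{ass:interchange}-(iii) (continuity of $a\mapsto\Lambda(a)$) and continuity of $c,c'$. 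Hence $V$ is attained in $\mathcal{J}$, which is Part 2.

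\textbf{Main obstacle.} Parts 1 and 3 are essentially bookkeeping; the delicate step is the interchange in Part 1 — establishing that $E^A(\cdot,s)$ is left-differentiable at the induced effort with derivative $\int s f_a\,dx-c'(a)$ despite the support's lower endpoint $L(a')$ moving with $a'$ — which is precisely what Assumption~\ref{ass:interchange}-(iv) together with the zero-extension convention is there to supply, and whose absence is exactly what drives the FOA failure discussed in Section~\ref{IFOAV}. A secondary point is ensuring, in Part 2, that the $\mathcal{J}$-inequality survives passage to the limit at a point where $\Lambda$ may blow up or $c'/(c+u)$ degenerate; Assumption~\ref{ass:interchange}-(iii) and $c(a^{\ast})>0$ for $a^{\ast}>0$ handle this.
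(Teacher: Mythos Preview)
Your Parts~1 and~3 match the paper's argument. One imprecision in Part~1: Assumption~\ref{ass:interchange}-(iv) is stated with integration domain $[L(a),\bar x]$, not $[0,\bar x]$, so the zero-extension does not directly yield that the left derivative \emph{equals} $\int_{L(a)}^{\bar x} s f_a\,dx - c'(a)$; there is a residual boundary term $h^{-1}\int_{L(a+h)}^{L(a)} s(x)\,f(x\mid a+h)\,dx$, which is $\le 0$ for $h<0$. The paper handles this by explicitly dropping this non-positive term to obtain the one-sided bound $0\le \int_{L(a)}^{\bar x} s(x)\,\frac{f(x\mid a+h)-f(x\mid a)}{h}\,dx - \frac{c(a+h)-c(a)}{h}$, and only then applies (iv). Your conclusion $\int s f_a\,dx\ge c'(a)$ is unaffected; just replace the intermediate ``equals'' by ``is at most''.

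For Part~2 your route genuinely differs from the paper's. The paper first proves an auxiliary lemma that $\Lambda(a)>0$ for all $a>0$, uses it to rewrite the $\mathcal{J}$-constraint as $u\ge c'(a)/\Lambda(a)-c(a)$, reduces \eqref{RPJ} to a one-dimensional maximization in $a$ over a compact interval $[0,\bar a]$, and invokes the upper-semicontinuous Weierstrass theorem. Your maximizing-sequence argument is more direct: it avoids the auxiliary lemma entirely, works with the multiplicative form $\Lambda(a)(c(a)+u)\ge c'(a)$ (so no division or sign check is needed), and passes to the limit using Assumptions~\ref{ass:interchange}-(i)--(iii) together with continuity of $c,c'$. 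Both are valid; yours is shorter, while the paper's one-dimensional reformulation has the side benefit of making the structure of the optimizer $(a^{\mathcal J},u^{\mathcal J})$ more transparent for the applications in Section~\ref{OCQBC}.
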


The relaxed implementable set \(\mathcal{J}\) naturally consists of two parts. First, any pair \((0, u)\) with \(u \ge u_0\) is implementable, since \(a=0\) is the minimal effort level and the principal can simply offer a fixed payment to ensure participation. Second, for \(a > 0\), the pair \((a,u)\) is implementable only if
\(\sup_{x\in (L(a), \bar x]}\big(\frac{f_a(x\mid a)}{f(x\mid a)}\big)
\ge\frac{c'(a)}{c(a)+u}\) 
i.e.\ only if the supremum likelihood ratio (may be $+\infty$) is at least the value of the agent’s marginal disutility per unit of total compensation. Intuitively, $f_a(x\mid a)$
measures how a marginal increase in effort raises the likelihood of observing outcome \(x\), and thus the likelihood ratio $\frac{f_a(x\mid a)}{f(x\mid a)} = \frac{s(x)f_a(x\mid a)}{s(x)f(x\mid a)}$ represents the marginal reward per unit of compensation at $x$. If this inequality fails, then for no outcome \(x\) does the payment make the marginal reward sufficient to offset the marginal disutility of effort; consequently, regardless of how compensation is structured across outcomes, the incentive compatibility cannot hold.




 Therefore, our relaxed implementable set $\mathcal{J}$ captures the necessary condition implied by both IC and IR for implementable effort-utility pairs. This relaxation, by construction, preserves all feasible contracts of \textbf{OP} as each feasible contract is recoverable by a pair in $\mathcal{J}$. To conclude this section, we present an example where the optimal (quota-bonus) contract induces kink optima that the IRA captures, unlike the FOA. The accompanying loss of efficiency (identified via $\mathcal{J}$) underscores the relaxation’s tightness; notably, in this case the relaxation is ``perfectly tight", as the true implementable set \(\mathcal{I}\) coincides with \(\mathcal{J}\). 

\begin{example}[loss of efficiency]\label{pareto}
The outcome $X$ follows a Pareto distribution \[f(x\mid a) \;=\;
\begin{cases}
\frac{2a^2}{x^3}, & x \ge a,\\
0, & x < a.
\end{cases}\] The agent chooses effort $a \geq 0$ and incurs cost $c(a)=a^3$. The agent's reservation utility is $u_0=0$.
\end{example}

In this example, \(\sup_{x \in (L(a), \bar x]} \big(\frac{f_a(x\mid a)}{f(x\mid a)}\big)
\ge\frac{c'(a)}{c(a)+u}\) is equivalent to $\frac{2}{a} \geq \frac{3a^2}{a^3+u}$, which is equivalent to $u\geq \frac{a^3}{2}$. Hence, the relaxed implementable set $\mathcal{J} = \{ (a,u) \mid a \geq 0,\; u \ge \tfrac{a^3}{2} \}$, and program \eqref{RPJ} is equivalent to $\max_{u\geq\, \frac{a^3}{2} \, \geq 0} \; \{2a - a^3 - u\}$. Solving this yields a unique solution $(a^\mathcal{J}, u^\mathcal{J}) = (\frac{2}{3}, \frac{4}{27})$. One can easily verify that the quota-bonus contract $s^*(x)=\tfrac{4}{9} \cdot \mathbf{1}_{\{x\ge2/3\}}$
implements \((a^\mathcal{J},u^\mathcal{J})\). Indeed, under \(s^*\) the agent’s expected utility is
$\mathbb{E}[s^*(X)\mid a]-c(a)
=\min\{a^2,\tfrac{4}{9}\}-a^3$,
which attains its maximum uniquely at a kink point \(a^\mathcal{J}=\tfrac{2}{3}\).
Therefore, by Theorem \ref{nececondimp}, $s^*(x)$ is the optimal contract. 

It is worth noting that the first best cannot be achieved because $(a_{FB}, u_0) = (\frac{\sqrt{6}}{3},0) \notin \mathcal{J}$. Instead, the agent's optimal effort is $a^\mathcal{J} = \frac{2}{3} < a_{FB}$, and his corresponding expected utility is $\frac{4}{27} > u_0$. The principal's expected payoff under $s^*(x)$ is $\frac{8}{9}$, which is lower than the principal's first-best expected payoff $\frac{4\sqrt{6}}{9}$.  

Moreover, for any $(\hat a, \hat u)\in \mathcal{J}$, one can choose $s = (\hat a^3 + \hat u) \cdot\mathbf{1}_{\{x\ge \hat a\}}$ to implement it. Under $s$, the agent's expected utility is $\mathbb{E}[s(X(a))\mid a] - c(a) =(\hat a^3 + \hat u)\, \min \{ \frac{a^2}{\hat a^2}, 1\} \;-\; a^3$. For $a\geq \hat a$, $(\hat a^3 + \hat u)\, \min \{ \frac{a^2}{\hat a^2}, 1\} \;-\; a^3$ is strictly decreasing in $a$. For $0\leq a< \hat a$, its derivative is $(\hat a^3 + \hat u)\frac{2a}{\hat a^2} - 3a^2$, which is non-negative because $(\hat a, \hat u)\in \mathcal{J}$ implies $\hat u \ge \frac{\hat a^3}{2} \implies (\hat a^3 + \hat u)\frac{2a}{\hat a^2} \geq 3 \hat a a \geq 3a^2$. Hence, the agent optimally chooses \(a=\hat a\), yielding the agent's expected utility \(\hat u\). This demonstrates that any $(\hat a, \hat u)\in \mathcal{J}$ is implementable, combining that $\mathcal I \subseteq \mathcal{J}$ (by Theorem \ref{nececondimp}), we conclude  $\mathcal I = \mathcal{J}$.






\section{Optimality Conditions for Quota-bonus Contracts}\label{OCQBC}

In the preceding section, the IRA implies that a sufficient condition for contract optimality is the implementation of a relaxed optimal effort-utility pair. This yields a direct application of the IRA: to demonstrate the optimality of a simple contract form, it suffices to verify whether any contract within that parametric form implements a relaxed optimal effort-utility pair. Simple contracts have long been studied in the literature, with quota-bonus contracts receiving particular attention under the canonical model of risk neutrality and limited liability \citep{park1995incentive, kim1997limited, oyer2000theory, jiang2024revisiting}. In this section, we focus on quota-bonus contracts and employ the IRA to derive their optimality conditions. We then compare these conditions with the FOA-based conditions established in prior work to assess the effectiveness of the IRA.






Quota-bonus contracts form a class of simple contracts defined by \( s_q = b\, \mathbf{1}_{\{x \ge q\}} \), where \( q, b \in [0, +\infty) \). Hence, identifying quota-bonus contracts that implement a specific effort utility pair amounts to searching over appropriate values of $q$ and $b$. The following lemma further narrows the search.



\begin{lemma}\label{optimalqb}
Suppose the quota-bonus contract \(s_q = b\, \mathbf{1}_{\{x \ge q\}}\) implements the pair \((a^\mathcal{J},u^\mathcal{J})\). Then the bonus and quota must satisfy the following: 
the bonus is given by \(b = \frac{c(a^\mathcal{J}) + u^\mathcal{J}}{1 - F(q \mid a^\mathcal{J})}\), and the quota \(q\) satisfies either \(q = L(a^\mathcal{J})\), or  \(q > L(a^\mathcal{J})\) and $\frac{-F_a(q \mid a^\mathcal{J})}{1 - F(q \mid a^\mathcal{J})} = \frac{c'(a^\mathcal{J})}{c(a^\mathcal{J}) + u^\mathcal{J}}$.
\end{lemma}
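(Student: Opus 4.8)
The plan is to exploit the explicit form of the agent's payoff under a quota-bonus contract and read off both conclusions from the implementation requirements. Under $s_q=b\,\mathbf{1}_{\{x\ge q\}}$ the agent's expected utility from effort $a$ is
\[
E^A(a,s_q)=b\,\mathbb{P}\bigl(X(a)\ge q\mid a\bigr)-c(a)=b\,\bigl(1-F(q\mid a)\bigr)-c(a),
\]
using continuity of $X(a)$ and the convention $F(q\mid a)=0$ when $q\le L(a)$. Since $s_q$ implements $(a^{\mathcal{J}},u^{\mathcal{J}})$, Definition~\ref{def:implementation} gives $u^{\mathcal{J}}=E^A(a^{\mathcal{J}},s_q)=b\bigl(1-F(q\mid a^{\mathcal{J}})\bigr)-c(a^{\mathcal{J}})$. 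I would first dispose of degeneracies: if $1-F(q\mid a^{\mathcal{J}})=0$ the bonus is never paid and the agent receives $-c(a^{\mathcal{J}})$, which (by IR and $u_0\ge 0$) forces $c(a^{\mathcal{J}})=0$, hence the trivial boundary/constant-payment case $a^{\mathcal{J}}=0$; outside that case $1-F(q\mid a^{\mathcal{J}})>0$ and the identity rearranges to $b=\dfrac{c(a^{\mathcal{J}})+u^{\mathcal{J}}}{1-F(q\mid a^{\mathcal{J}})}$, the first claim. From here I assume $a^{\mathcal{J}}>0$, which also forces $b>0$ and $q<\bar x$ (if $q\ge\bar x$ the bonus is never paid and $E^A(\cdot,s_q)=-c(\cdot)$ is maximized only at $0$).

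For the quota, the key structural fact is that $E^A(\cdot,s_q)$ is smooth away from the effort level at which $L(\cdot)$ reaches $q$, and can have a kink precisely there: on $\{a:L(a)\ge q\}$ the bonus is certain and $E^A(a,s_q)=b-c(a)$, whereas on $\{a:L(a)<q\}$ we have $E^A(a,s_q)=b(1-F(q\mid a))-c(a)$ with $a\mapsto F(q\mid a)$ differentiable by the model's assumption that $F(x\mid a)$ is differentiable in $a$ for $x\in(L(a),\bar x]$. I would then split on the position of $q$ relative to $L(a^{\mathcal{J}})$. If $q<L(a^{\mathcal{J}})$, continuity of $L$ at $a^{\mathcal{J}}>0$ yields a neighborhood of $a^{\mathcal{J}}$ on which $L(a)>q$, so there $E^A(a,s_q)=b-c(a)$ is strictly decreasing ($c'>0$); then a slightly smaller effort strictly raises the agent's payoff, contradicting that $a^{\mathcal{J}}$ maximizes $E^A(\cdot,s_q)$ (the IC part of implementation). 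Hence $q\ge L(a^{\mathcal{J}})$, and if equality holds we are in the first alternative of the statement.

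It remains to treat $q>L(a^{\mathcal{J}})$. Continuity of $L$ gives $q>L(a)$ on a neighborhood of $a^{\mathcal{J}}$, so on that neighborhood $E^A(a,s_q)=b(1-F(q\mid a))-c(a)$ is differentiable with $\partial_a E^A(a,s_q)=-b\,F_a(q\mid a)-c'(a)$. Since $a^{\mathcal{J}}$ is an interior maximizer of a function differentiable near it, stationarity gives $-b\,F_a(q\mid a^{\mathcal{J}})-c'(a^{\mathcal{J}})=0$; substituting $b=\dfrac{c(a^{\mathcal{J}})+u^{\mathcal{J}}}{1-F(q\mid a^{\mathcal{J}})}$ and rearranging yields $\dfrac{-F_a(q\mid a^{\mathcal{J}})}{1-F(q\mid a^{\mathcal{J}})}=\dfrac{c'(a^{\mathcal{J}})}{c(a^{\mathcal{J}})+u^{\mathcal{J}}}$, the second alternative.

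The computations here are routine; the step demanding care is the case split at $q=L(a^{\mathcal{J}})$ together with the degenerate configurations ($q\ge\bar x$, bonus never paid, boundary effort $a^{\mathcal{J}}=0$). The conceptual content — and the reason this lemma does not discard optimal contracts the way a bare first-order condition would — is that $E^A(\cdot,s_q)$ may fail to be differentiable at the implemented effort, which happens exactly when $q=L(a^{\mathcal{J}})$; the trichotomy above shows that an interior optimum either sits at that kink or satisfies the usual stationarity equation, and both are recorded in the statement.
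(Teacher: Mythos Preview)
Your proposal is correct and follows essentially the same approach as the paper: compute $E^A(a,s_q)=b(1-F(q\mid a))-c(a)$, rule out $q<L(a^{\mathcal J})$ by an IC violation, obtain the stationarity identity when $q>L(a^{\mathcal J})$, and read off $b$ from the utility equation. Your treatment is slightly more careful in two respects: you handle the degenerate cases ($1-F(q\mid a^{\mathcal J})=0$, $q\ge\bar x$, $a^{\mathcal J}=0$) explicitly, and you rule out $q<L(a^{\mathcal J})$ via local strict decrease of $b-c(a)$ using continuity of $L$, whereas the paper invokes a point $a'=L^{-1}(q)$, which is mildly informal since $L$ is only assumed nondecreasing.
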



Notably, the bonus $b$ is uniquely determined by the IR constraint once the quota \(q\) is given.
It is sufficient to consider only these few candidate contracts because the IC constraint requires that \(a^{\mathcal J}\) be either a kink point, a boundary point, or a stationary point of the agent’s expected utility under any contract that implements \((a^{\mathcal J}, u^{\mathcal J})\). Note that the agent’s expected utility under \(s_q = b\, \mathbf{1}_{\{x \ge q\}}\) is \(b(1-F(q\mid a))- c(a)\), which is differentiable in \(a\) when \(q > L(a^{\mathcal J})\). Thus, \(q > L(a^{\mathcal J})\) satisfying $\frac{-F_a(q \mid a^{\mathcal J})}{1 - F(q \mid a^{\mathcal J})} = \frac{c'(a^{\mathcal J})}{c(a^{\mathcal J}) + u^{\mathcal J}}$ corresponds to the case where \(a^{\mathcal J}\) is a stationary point. By contrast, \(q = L(a^{\mathcal J})\) corresponds to the case where \(a^{\mathcal J}\) is a kink point or a boundary point (i.e., \(a^{\mathcal J} = 0\)). We thus establish the following exogenous conditions.

\begin{theorem}[Optimality of quota-bonus contracts]\label{qboptimalcondition}
Let \(\bigl(a^{\mathcal J},u^{\mathcal J}\bigr)\) be any solution to program~\eqref{RPJ}.  
There exists a quota-bonus contract that optimally solves \textbf{OP} if one of the following holds:

\noindent\textbf{(C1)}\;
\(
1-F (L(a^{\mathcal J})\mid a )
\le
\frac{c(a)+u^{\mathcal J}}{c(a^{\mathcal J})+u^{\mathcal J}},
\ \ \forall\,a \in [0, a^{\mathcal J}),
\)
\begin{center}
or
\end{center}
\noindent\textbf{(C2)}\;
there exists \(q>L(a^{\mathcal J})\) satisfying 
\(
\frac{-F_a(q\mid a^{\mathcal J})}{1-F(q\mid a^{\mathcal J})}
=
\frac{c'(a^{\mathcal J})}{c(a^{\mathcal J})+u^{\mathcal J}}
\)
such that 
\(
\frac{1-F(q\mid a)}{1-F\!(q\mid a^{\mathcal J})}
\le
\frac{c(a)+u^{\mathcal J}}{c(a^{\mathcal J})+u^{\mathcal J}},
\ \ \forall a\geq 0.
\)
\end{theorem}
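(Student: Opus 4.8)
The plan is to use Theorem~\ref{nececondimp}(3) as a black box: since any contract implementing $(a^{\mathcal J},u^{\mathcal J})$ is optimal for \textbf{OP}, it suffices, under \textbf{(C1)} or \textbf{(C2)}, to exhibit a quota-bonus contract $s_q=b\,\mathbf{1}_{\{x\ge q\}}$ that implements $(a^{\mathcal J},u^{\mathcal J})$. Lemma~\ref{optimalqb} already narrows the search to exactly two candidate forms, so I would treat them separately. In each, I fix the bonus by the binding-IR relation $b=\frac{c(a^{\mathcal J})+u^{\mathcal J}}{1-F(q\mid a^{\mathcal J})}$, which is precisely what makes $E^A(a^{\mathcal J},s_q)=u^{\mathcal J}\ge u_0$ (so IR holds and, with $b\ge 0$, so does LL). All the real content is then in verifying the global incentive requirement $E^A(a,s_q)\le u^{\mathcal J}$ for all $a\ge 0$.

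Writing $E^A(a,s_q)=b\,(1-F(q\mid a))-c(a)$, with the convention $F(q\mid a)=0$ when $q\le L(a)$, and substituting the expression for $b$, this incentive requirement is equivalent (after clearing the positive factor $c(a^{\mathcal J})+u^{\mathcal J}$) to
\[
\frac{1-F(q\mid a)}{1-F(q\mid a^{\mathcal J})}\;\le\;\frac{c(a)+u^{\mathcal J}}{c(a^{\mathcal J})+u^{\mathcal J}},\qquad\forall\,a\ge 0 .
\]
For case \textbf{(C2)} I would take $q$ to be the quota satisfying the stated first-order equation --- this equation is exactly the stationarity of $a\mapsto E^A(a,s_q)$ at the interior-support point $a^{\mathcal J}$ (well-defined since $q>L(a^{\mathcal J})$ and $L$ is continuous), and it is forced by Lemma~\ref{optimalqb} --- and the displayed inequality is then precisely the hypothesis of (C2). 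For case \textbf{(C1)} I would take $q=L(a^{\mathcal J})$, so $F(q\mid a^{\mathcal J})=0$ and $b=c(a^{\mathcal J})+u^{\mathcal J}$; the displayed inequality reduces to the hypothesis of (C1) on $a\in[0,a^{\mathcal J})$, while for $a\ge a^{\mathcal J}$ it holds automatically since $L$ nondecreasing gives $L(a)\ge L(a^{\mathcal J})=q$ (so the left side is $1$) and $c$ nondecreasing gives $c(a)\ge c(a^{\mathcal J})$ (so the right side is $\ge 1$). This is why (C1) need only be imposed below $a^{\mathcal J}$.

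The final step, and the only one needing genuine care rather than routine algebra, is passing from ``$a^{\mathcal J}$ is a global maximizer of $E^A(\cdot,s_q)$ with value $u^{\mathcal J}$'' to ``$s_q$ implements $(a^{\mathcal J},u^{\mathcal J})$'' in the presence of the principal-favorable tie-breaking rule. If $a^{\mathcal J}$ is the unique maximizer this is immediate; otherwise the agent picks some other maximizer $a^*$, still with $E^A(a^*,s_q)=u^{\mathcal J}$, so $s_q$ implements $(a^*,u^{\mathcal J})\in\mathcal I$, and $E^P(a^*,s_q)=\mathbb E[X(a^*)\mid a^*]-c(a^*)-u^{\mathcal J}$ is then at most the value of \eqref{RPJ} (as $(a^{\mathcal J},u^{\mathcal J})$ solves it) and at least $E^P(a^{\mathcal J},s_q)=\mathbb E[X(a^{\mathcal J})\mid a^{\mathcal J}]-c(a^{\mathcal J})-u^{\mathcal J}$ (as the tie-break favours the principal and $a^{\mathcal J}$ is available); hence $(a^*,u^{\mathcal J})$ is itself a solution of \eqref{RPJ}, and Theorem~\ref{nececondimp}(3) applied to it shows $s_q$ is optimal. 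The one degenerate case, $c(a^{\mathcal J})+u^{\mathcal J}=0$, forces $a^{\mathcal J}=u^{\mathcal J}=0$ and is covered by the constant zero contract. So the main obstacle is bookkeeping --- keeping the regimes $q\le L(a)$ and $q>L(a)$ straight and closing this tie-breaking gap --- after which the argument is just substituting the IR-pinned bonus into the IC inequality.
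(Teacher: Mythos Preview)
Your proposal is correct and follows essentially the same route as the paper: under \textbf{(C1)} take $q=L(a^{\mathcal J})$ and $b=c(a^{\mathcal J})+u^{\mathcal J}$, under \textbf{(C2)} take the $q$ from the hypothesis and $b=\frac{c(a^{\mathcal J})+u^{\mathcal J}}{1-F(q\mid a^{\mathcal J})}$, verify the global IC inequality reduces to the stated condition, and invoke Theorem~\ref{nececondimp}(3). Your treatment is in fact slightly more careful than the paper's, which asserts that $a^{\mathcal J}$ is the \emph{unique} best response under \textbf{(C1)} without justification (the condition is a weak inequality on $[0,a^{\mathcal J})$); your tie-breaking argument closes this gap cleanly.
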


Checking either \textbf{(C1)} or \textbf{(C2)} is straightforward: once the distribution \(F(\cdot\mid a)\) and the cost function \(c(\cdot)\) are specified, for \textbf{(C1)}, each side of the inequality is a single‐valued function of \(a\) and verifying optimality therefore reduces to a pointwise comparison of two scalar functions. For \textbf{(C2)}, one first solves 
\(
\frac{-F_a(q\mid a^{\mathcal J})}{1-F(q\mid a^{\mathcal J})}
=\frac{c'(a^{\mathcal J})}{c(a^{\mathcal J})+u^{\mathcal J}}
\)
for \(q>L(a^{\mathcal J})\), a straightforward one‐dimensional root‐finding step, and then simply compares two scalar functions of \(a\).  It is also worth noting that our argument accommodates cases where program~\eqref{RPJ} admits multiple solutions. In such instances, $(a^{\mathcal{J}},u^{\mathcal{J}})$ may be taken as any solution to ~\eqref{RPJ}.
Moreover, Theorem \ref{qboptimalcondition} shows that our exogenous conditions not only capture kink optima but also guarantee the existence of an optimal quota-bonus contract—features absent from FOA-based conditions \citep{jiang2024revisiting}. 

\subsection{Comparison with FOA-Based Conditions}
Literature on the optimality of quota-bonus contracts typically relies on a common assumption that the FOA is valid for quota-bonus contracts \citep{park1995incentive, kim1997limited, oyer2000theory, jiang2024revisiting}. Following \cite{jiang2024revisiting}, we give the following definition. 
\begin{definition}\label{defFOAVQB}
The FOA is valid for quota-bonus contracts (FOAVQB) if, whenever a quota-bonus contract solves \textbf{RP}, it also solves \textbf{OP}; and if no quota-bonus contract solves \textbf{RP}, then none solves \textbf{OP}.
\end{definition}

The FOAVQB is endogenous and, in fixed-support settings, is justified by the exogenous conditions in \cite{jiang2024revisiting} or by the Mirrlees-Rogerson condition \citep{rogerson1985first, park1995incentive}. The following theorem enables a comparison of our conditions with FOAVQB and with any FOA-based exogenous conditions, in terms of the optimality of quota-bonus contracts.



\begin{theorem}\label{weakerthanFOAVQB}
Suppose there exists a quota-bonus contract $s$ that optimally solves \textbf{RP} and also solves \textbf{OP}. Then condition \textbf{(C2)} must hold.
\end{theorem}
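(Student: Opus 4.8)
The plan is to start from the hypothesis that a quota-bonus contract $s = b\,\mathbf{1}_{\{x \ge q\}}$ optimally solves \textbf{RP} and also solves \textbf{OP}, and then extract from ``$s$ solves \textbf{OP}'' exactly what Lemma~\ref{optimalqb} needs. The first step is to observe that since $s$ solves \textbf{OP}, the pair $(a_s, E^A(a_s,s))$ it induces must be an optimal implementable pair; by Theorem~\ref{nececondimp} (parts 2--3) we may identify this with a solution $(a^{\mathcal J}, u^{\mathcal J})$ to program~\eqref{RPJ}. So $s$ implements $(a^{\mathcal J}, u^{\mathcal J})$ in the sense of Definition~\ref{def:implementation}, and Lemma~\ref{optimalqb} applies: either $q = L(a^{\mathcal J})$, or $q > L(a^{\mathcal J})$ with the stationarity identity $\frac{-F_a(q\mid a^{\mathcal J})}{1-F(q\mid a^{\mathcal J})} = \frac{c'(a^{\mathcal J})}{c(a^{\mathcal J})+u^{\mathcal J}}$. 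The crux is to rule out the first alternative, i.e.\ to show that $q = L(a^{\mathcal J})$ is incompatible with $s$ solving \textbf{RP}.

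The second step is therefore the key case analysis: suppose $q = L(a^{\mathcal J})$ and derive a contradiction with RIC. Under a quota-bonus contract with $q = L(a^{\mathcal J})$, the agent's expected utility $E^A(a,s) = b\,(1 - F(q \mid a)) - c(a)$ has a kink or boundary behavior precisely at $a = a^{\mathcal J}$: for $a \ge a^{\mathcal J}$ we have $q = L(a^{\mathcal J}) \le L(a)$, so the bonus region lies (weakly) below the support's left endpoint and $1 - F(q\mid a)$ is identically $1$, giving $E^A(a,s) = b - c(a)$ with right-derivative $-c'(a^{\mathcal J}) < 0$ at $a^{\mathcal J}$ (using $c'>0$). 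Hence the right-hand derivative of $E^A$ at $a^{\mathcal J}$ is strictly negative, so $a^{\mathcal J}$ cannot satisfy RIC, i.e.\ $\frac{\partial E^A(a^{\mathcal J}, s)}{\partial a} = 0$ fails. But if $s$ solves \textbf{RP}, then the pair $(a_s, s)$ it induces satisfies RIC; and here $a_s = a^{\mathcal J}$ (the effort $s$ induces is the one making it optimal). This is a contradiction, so the case $q = L(a^{\mathcal J})$ is impossible, leaving only $q > L(a^{\mathcal J})$ with the stationarity condition — which is exactly the first half of \textbf{(C2)}.

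The third step is to obtain the inequality half of \textbf{(C2)}, namely $\frac{1-F(q\mid a)}{1-F(q\mid a^{\mathcal J})} \le \frac{c(a)+u^{\mathcal J}}{c(a^{\mathcal J})+u^{\mathcal J}}$ for all $a \ge 0$. This comes directly from the IC constraint for the contract $s$ in \textbf{OP}: since $s$ solves \textbf{OP} and induces effort $a^{\mathcal J}$ with utility $u^{\mathcal J}$, we have $E^A(a, s) \le E^A(a^{\mathcal J}, s) = u^{\mathcal J}$ for every $a \ge 0$, i.e.\ $b(1 - F(q\mid a)) - c(a) \le u^{\mathcal J}$. Using $b = \frac{c(a^{\mathcal J}) + u^{\mathcal J}}{1 - F(q \mid a^{\mathcal J})}$ from Lemma~\ref{optimalqb} and rearranging (note $c(a^{\mathcal J}) + u^{\mathcal J} = b(1-F(q\mid a^{\mathcal J}))$) yields precisely the stated inequality after dividing by $c(a^{\mathcal J}) + u^{\mathcal J} > 0$.

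The main obstacle I anticipate is being careful about the logical bookkeeping connecting ``$s$ solves \textbf{OP}'' and ``$s$ solves \textbf{RP}'' to the single induced pair $(a^{\mathcal J}, u^{\mathcal J})$: one must check that the effort $s$ induces as an \textbf{OP}-solution is the same as (a) the argmax used in IC, (b) the RIC-stationary point when viewed as an \textbf{RP}-solution, and (c) a maximizer of~\eqref{RPJ} via Theorem~\ref{nececondimp}. The tie-breaking convention (agent picks the principal-most-favorable effort among indifferent ones) and the possibility that~\eqref{RPJ} has multiple solutions both need a brief remark so that ``$(a^{\mathcal J}, u^{\mathcal J})$'' is unambiguous here; once that is pinned down, the contradiction in Step~2 and the inequality in Step~3 are short. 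A minor secondary point is handling $\bar x = +\infty$ and the case $a^{\mathcal J} = 0$ uniformly, but the argument above degenerates gracefully (if $a^{\mathcal J} = 0$ then $q = L(0)$ is forced and RIC is vacuous/one-sided, but then $s$ solving \textbf{RP} with a boundary optimum would itself need separate treatment — worth a sentence to confirm \textbf{(C2)}'s $q > L(a^{\mathcal J})$ branch still applies or that this degenerate case is excluded by the RP-optimality hypothesis).
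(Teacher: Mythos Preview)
Your Steps 2 and 3 are correct and match the paper's argument: once you know that the induced pair $(a_s,u_s)$ is a maximizer of program~\eqref{RPJ}, ruling out $q=L(a_s)$ via the RIC right-derivative and reading off the inequality half of \textbf{(C2)} from IC is exactly what the paper does.

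The gap is in Step~1. You write that ``since $s$ solves \textbf{OP}, the pair $(a_s,E^A(a_s,s))$ it induces must be an optimal implementable pair; by Theorem~\ref{nececondimp} (parts 2--3) we may identify this with a solution $(a^{\mathcal J},u^{\mathcal J})$ to program~\eqref{RPJ}.'' But Theorem~\ref{nececondimp} only gives the \emph{converse}: if a pair solves~\eqref{RPJ} and is implementable, then any implementing contract is \textbf{OP}-optimal. Because $\mathcal I\subseteq\mathcal J$, the \textbf{OP}-optimal pair achieves $\max_{\mathcal I}$, which may be strictly below $\max_{\mathcal J}$; nothing in Theorem~\ref{nececondimp} rules out the possibility that no maximizer of~\eqref{RPJ} is implementable, in which case the \textbf{OP}-optimal pair would \emph{not} solve~\eqref{RPJ}. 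So you cannot simply ``identify'' $(a_s,u_s)$ with $(a^{\mathcal J},u^{\mathcal J})$.

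This is precisely where the paper spends almost all of its effort, and it is also where the hypothesis that $s$ solves \textbf{RP} (not just \textbf{OP}) is really used. The paper argues by contradiction: if $(a_s,u_s)$ did not solve~\eqref{RPJ}, take a true maximizer $(a^{\mathcal J},u^{\mathcal J})$ of~\eqref{RPJ}; using only that $(a^{\mathcal J},u^{\mathcal J}+\epsilon)\in\mathcal J$ for small $\epsilon>0$, it explicitly constructs a contract (a two-level step function on carefully chosen outcome sets determined by the likelihood ratio $f_a/f$, with separate cases depending on whether the likelihood ratio dips below $c'(a^{\mathcal J})/(c(a^{\mathcal J})+u^{\mathcal J}+\epsilon)$) that, together with $a^{\mathcal J}$, satisfies RIC and IR and yields strictly higher principal payoff than $s$ in \textbf{RP}, contradicting \textbf{RP}-optimality of $s$. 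Your proposal uses the \textbf{RP} hypothesis only in Step~2 to exclude $q=L(a^{\mathcal J})$; you need it again, and more substantially, to justify Step~1.
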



Theorem~\ref{weakerthanFOAVQB} implies that every scenario in which a quota-bonus contract solves both \textbf{RP} and \textbf{OP} is covered by \textbf{(C2)} (and hence by \textbf{(C1)} $\lor$ \textbf{(C2)}).
To compare with the endogenous FOAVQB, partition the scenarios in which a quota-bonus contract is optimal for \textbf{OP} into two disjoint classes: (i) those in which the contract is also optimal for \textbf{RP}; and (ii) those in which it is optimal only for \textbf{OP}.
FOAVQB pertains only to class (i). By Theorem~\ref{weakerthanFOAVQB}, all cases in class (i) fall under \textbf{(C2)}, whereas Example~\ref{exp1} belongs to class (ii) and is covered by \textbf{(C1)}.
Therefore, \textbf{(C1)} $\lor$ \textbf{(C2)} is strictly more general than FOAVQB with respect to the optimality conditions for quota-bonus contracts and encompasses a broader range of practical scenarios.
Accordingly, FOA-based exogenous optimality conditions in the literature \citep{park1995incentive,kim1997limited,jiang2024revisiting} address only class (i): they first ensure that a quota-bonus contract optimally solves \textbf{RP} and then impose constraints so that it also solves \textbf{OP}.
This suggests that attempts to justify the FOA in shifting-support settings as a supplement to \citet{oyer2000theory}’s FOA-based optimality result for quota-bonus contracts are entirely redundant: the IRA method renders such justifications obsolete by providing a strictly more general foundation.

\section{Conclusions}
This paper revisits the principal-agent problems with moral hazard. While the FOA has become the standard tool for relaxing the incentive-compatibility constraint, we demonstrate that it fails as a relaxation in shifting-support settings. The core problem arises from the presence of non-differentiable optima such as kink solutions, which are excluded by the first-order condition. As a result, optimal (quota-bonus) contracts that are feasible and optimal in the original problem may be lost in the FOA-based relaxation. This finding casts doubt on the reliability of the FOA in contexts where non-differentiability naturally arises, and may be misleading in applications such as commission-versus-bonus comparisons. 

Motivated by this limitation, we introduced the Implementation Relaxation Approach (IRA), which relaxes the set of implementable effort-utility pairs rather than the incentive-compatibility constraint. The IRA not only preserves all feasible contracts but also is simple and general enough to apply, particularly for studying the optimality of simple contracts. Using the IRA, we derived an exogenous optimality condition for quota-bonus contracts that applies in both fixed- and shifting support settings, filling a gap where exogenous optimality
conditions are missing in shifting support settings. Our condition not only ensures the existence of optimal contracts but is also strictly more general than all FOA-based optimality conditions. 

Looking ahead, beyond quota-bonus contracts, the IRA can be readily extended to other simple contractual forms—such as linear, debt, or quadratic—by appropriately parameterizing the problem, following the same procedure employed in our analysis. Moreover, our analysis can also be adapted to discrete settings by replacing the density with point–mass measures.

Furthermore, two directions appear particularly promising. First, beyond the canonical model studied here, IRA could be applied to a broader class of principal-agent problems, such as multitasking agency \citep{dai2021incentive} or dynamic contracting \citep{plambeck2006partnership}. Second, although the exogenous superset we use to approximate the true implementable set appears sufficiently tight for the quota-bonus case, it remains an open question whether tighter relaxations can be developed. We believe that IRA’s relaxation of the implementable set has the potential to become a methodological tool as general as FOA’s relaxation of the IC constraint. Accordingly, we expect IRA to serve as a versatile framework for advancing the analysis of principal-agent problems, and potentially other bilevel optimization problems.








%
%
%
%

\bibliographystyle{informs2014} 
\bibliography{sample-bibliography}

\newpage
\begin{APPENDICES}\label{appn1}
\section{Proofs}

\label{sec:appendix_main_proof}

\setcounter{equation}{0} \numberwithin{equation}{section}

\setcounter{lemma}{0} \numberwithin{lemma}{section}

\setcounter{table}{0} \numberwithin{table}{section}

\begin{lemma}\label{suppositive}
 For any $a>0$,   \[\sup_{x \in (L(a), \bar x]} \left( \frac{f_a(x\mid a)}{f(x\mid a)} \right) > 0.\]
\end{lemma}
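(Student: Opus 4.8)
The plan is to derive the strict positivity of the supremum likelihood ratio directly from the strict first-order stochastic dominance hypothesis $F_a<0$. The idea is that for a threshold $x_0$ lying strictly inside the support, the tail probability satisfies $1-F(x_0\mid a)=\int_{x_0}^{\bar x} f(t\mid a)\,dt$; differentiating this identity in $a$ shows that $\int_{x_0}^{\bar x} f_a(t\mid a)\,dt$ equals the strictly positive quantity $-F_a(x_0\mid a)$, and hence $f_a(\cdot\mid a)$ must be strictly positive somewhere on the support.

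In detail, I would fix $a>0$, choose any $x_0$ with $L(a)<x_0<\bar x$ (the interval is nonempty because $L(a)\in[0,\bar x)$), and work with left increments $h\uparrow 0$. Since $L$ is nondecreasing, $h<0$ forces $L(a+h)\le L(a)<x_0$, so $x_0$ sits strictly above the lower endpoint of the support of both $X(a)$ and $X(a+h)$; consequently $\int_{x_0}^{\bar x} f(t\mid a')\,dt=1-F(x_0\mid a')$ for $a'\in\{a,a+h\}$. Subtracting the two and dividing by $h$ rewrites $\int_{x_0}^{\bar x}\frac{f(t\mid a+h)-f(t\mid a)}{h}\,dt$ as $\frac{F(x_0\mid a)-F(x_0\mid a+h)}{h}$, which converges to $-F_a(x_0\mid a)$ as $h\uparrow 0$ because $F(x_0\mid\cdot)$ is differentiable at $a$. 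On the other hand, applying Assumption~\ref{ass:interchange}-(iv) with the indicator ``contract'' $s=\mathbf 1_{\{\cdot\ge x_0\}}$ — which is admissible since $\mathbb{E}[s(X(a))\mid a]=1-F(x_0\mid a)\le 1<\infty$ — lets me pass the limit inside the integral, so that the same quantity also converges to $\int_{x_0}^{\bar x} f_a(t\mid a)\,dt$. Equating the two limits yields $\int_{x_0}^{\bar x} f_a(t\mid a)\,dt=-F_a(x_0\mid a)>0$, the strict inequality coming from the hypothesis $F_a<0$ on $(L(a),\bar x]\times(0,\infty)$.

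To conclude, $f_a(\cdot\mid a)$ is continuous on $(L(a),\bar x]\supseteq[x_0,\bar x]$ by assumption, so a strictly positive integral over $[x_0,\bar x]$ forces $f_a(x^*\mid a)>0$ for some $x^*\in[x_0,\bar x]$; since $f(x^*\mid a)>0$ there, $\frac{f_a(x^*\mid a)}{f(x^*\mid a)}>0$, and therefore $\sup_{x\in(L(a),\bar x]}\frac{f_a(x\mid a)}{f(x\mid a)}>0$. The only steps requiring care are the limit–integral interchange, which is exactly what Assumption~\ref{ass:interchange}-(iv) is meant to guarantee (e.g.\ via dominated convergence under a locally bounded likelihood ratio), and the use of the left limit $h\uparrow 0$, which keeps the moving lower support bound $L(a+h)$ below $x_0$; I do not foresee any deeper obstacle. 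Equivalently one could argue by contradiction — assuming $f_a\le 0$ everywhere on the support would make $\int_{x_0}^{\bar x} f_a(t\mid a)\,dt\le 0$, contradicting its equality with $-F_a(x_0\mid a)>0$ — but the direct argument above is cleaner.
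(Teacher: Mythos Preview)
Your proposal is correct and follows essentially the same approach as the paper: both apply Assumption~\ref{ass:interchange}-(iv) to the indicator contract $s=\mathbf{1}_{\{x\ge x_0\}}$ (with $x_0\in(L(a),\bar x)$) to identify $\int_{x_0}^{\bar x} f_a(x\mid a)\,dx$ with $-F_a(x_0\mid a)>0$, and then conclude. The only cosmetic difference is that the paper packages the final step as the contradiction you mention in your last sentence, whereas you present it directly.
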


\begin{proof}{\textbf{Proof of Lemma \protect\ref{suppositive}}}
~

Suppose not. Then for any $x \in (L(a), \bar x]$, $\frac{f_a(x\mid a)}{f(x\mid a)} \leq 0$, which implies that $f_a(x\mid a)\leq 0, \forall x \in (L(a), \bar x]$. This implies that, for a given $x_0 \in (L(a), \bar x)$,
\[   \int_{L(a)}^{\bar x}  \lim_{h \uparrow 0}  \,\Big[ \frac{f(x\mid a+h)-f(x\mid a)}{h} \Big] \cdot  \mathbf{1}_{x>x_0}\,dx =\int_{x_0}^{\bar x} f_a(x\mid a)dx \leq 0 ,  \] 
which, by Assumption \ref{ass:interchange}-(iv), further implies that 
\[ \lim_{h \uparrow 0}  \int_{L(a)}^{\bar x}  \,\Big[ \frac{f(x\mid a+h)-f(x\mid a)}{h} \Big]  \cdot \mathbf{1}_{x>x_0} \,dx =   \int_{L(a)}^{\bar x} \lim_{h \uparrow 0}  \,\Big[ \frac{f(x\mid a+h)-f(x\mid a)}{h} \Big] \cdot \mathbf{1}_{x>x_0}  \,dx \leq 0.  \] 
It follows that
\[\begin{aligned}
0 \geq     \lim_{h \uparrow 0}  \int_{L(a)}^{\bar x}  \,\Big[ \frac{f(x\mid a+h)-f(x\mid a)}{h} \Big]  \cdot \mathbf{1}_{x>x_0} \,dx
&= \lim_{h \uparrow 0}  \int_{x_0}^{\bar x} \Big[ \frac{f(x\mid a+h)-f(x\mid a)}{h} \Big] dx
\\&= \lim_{h \uparrow 0}  \frac{-F(x_0 \mid a + h) + F(x_0 \mid a)}{h} 
\\& = - F_a(x_0 \mid a).
\end{aligned} 
   \] However, that $- F_a(x_0 \mid a) \leq 0$ contradicts the model assumption that $F_a < 0$ for any $(x, a) \in  (L(a),\bar x] \times (0, +\infty)$. 
$\hfill$ $\blacksquare$
\end{proof}
~

\begin{proof}{\textbf{Proof of Theorem \protect\ref{nececondimp}}}
~

\textbf{(i)} By definition, any implementable $(a, u)$ must satisfy $a \geq 0$ and $u\geq u_0$. Also, for any $ u  \geq u_0$, $(0, u)$ is implemented by contract $s(x) = c(0) + u$. Hence, to show that any implementable $(a, u)$ must belong to $\mathcal{J}$, it remains to show that for any implementable pair $(a, u)$ satisfying  $a>0$ and $u\geq u_0$, we must have 
\begin{equation}\label{nececondition}
    \sup_{ x \in (L(a), \, \bar x]} \frac{f_a(x \mid a)}{f(x \mid a)} \geq \frac{c'(a)}{c(a) + u}.
\end{equation}
Since \((a,u)\) is implementable, there must exist a limited liability contract \(s(x)\) such that \(E^A(a,s) = u\) and \(E^A(\hat{a},s) \leq u\) for all $\hat{a}> 0$. Hence, that \(a > 0 \) is a global maximum implies that for any $h \in (-a,0)$, \( E^A(a+h,s) -E^A(a,s) \leq 0 \). This implies that for any $h \in (-a,0)$,
\begin{equation}\label{eq:left-derivative}
\begin{aligned}
        \int_{L(a+h)}^{\bar x} s(x)\,f(x\mid a+h)\,dx
        \;-\;
        \int_{L(a)}^{\bar x}   s(x)\,f(x\mid a)\,dx
      \;-\;  c(a+h) + c(a) \leq 0.
\end{aligned}
\end{equation}
Note that the first term in the above inequality
\[ 
\begin{aligned}
\int_{L(a+h)}^{\bar x} s(x)\,f(x\mid a+h)\,dx &= \int_{L(a)}^{\bar x} s(x)\,f(x\mid a+h)\,dx + \int_{L(a+h)}^{L(a)} s(x)\,f(x\mid a+h)\,dx
\\& \geq \int_{L(a)}^{\bar x} s(x)\,f(x\mid a+h)\,dx
\end{aligned}
\] because $h<0$ and $s(x)\geq 0, f(x\mid a+h)\geq 0, \forall x$. \\
Hence, by inequality \eqref{eq:left-derivative}, we have 
\[        \int_{L(a)}^{\bar x} s(x)\,f(x\mid a+h)\,dx
        \;-\;
        \int_{L(a)}^{\bar x}   s(x)\,f(x\mid a)\,dx
      \;-\;  c(a+h) + c(a) \leq 0,\]
which is equivalent to
\[   \frac{1}{h} \Big[ \int_{L(a)}^{\bar x} s(x)\,f(x\mid a+h)\,dx
        \;-\;
        \int_{L(a)}^{\bar x}   s(x)\,f(x\mid a)\,dx \Big]
      \;-\;  \frac{c(a+h) - c(a)}{h} \geq 0, \ \ \ \forall h \in (-a,0), \]
which is equivalent to 
\[    \int_{L(a)}^{\bar x} s(x)\,\Big[ \frac{f(x\mid a+h)-f(x\mid a)}{h} \Big]\,dx
      \;-\;  \frac{c(a+h) - c(a)}{h} \geq 0, \ \ \ \forall h \in (-a,0). \]
Since the (IR) constraint implies that $\mathbb{E}[s(x)\mid a ] = c(a)+u<\infty$, Assumption \ref{ass:interchange}-(iv) allows us to take limits inside the integral. Hence, let $h \uparrow 0$ and we obtain
\[\int_{L(a)}^{\bar x} s(x) f_a(x \mid a) \,dx  -  c^\prime(a) \geq 0. \]
Combining that 
\begin{equation} \label{A2}
   E^A(a,s)  =  \int_{L(a)}^{\bar x} s(x) f(x \mid a) \,dx -c(a) = u \iff  \int_{L(a)}^{\bar x} s(x) f(x \mid a) \,dx = c(a)+u.
\end{equation}
we obtain
\begin{equation*}
  \int_{L(a)}^{\bar x} s(x) f_a(x \mid a) \,dx   \geq  \frac{c^\prime(a)}{c(a)+u} \cdot \int_{L(a)}^{\bar x} s(x) f(x \mid a) \,dx,
\end{equation*}
which implies
\begin{equation}\label{lemmaeq}
  \int_{L(a)}^{\bar x} s(x) f(x \mid a) \left( \frac{f_a(x \mid a)}{f(x \mid a)} - \frac{c^\prime(a)}{c(a)+u} \right) \,dx   \geq  0.
\end{equation}

Suppose condition \ref{nececondition} does not hold. Then we must have $\frac{f_a(x \mid a)}{f(x \mid a)} - \frac{c^\prime(a)}{c(a)+u} < 0, \forall x \in (L(a), \, \bar x]$. Since $u_0\geq 0$, $c(a)>c(0)\geq 0$, $\forall a>0$, we have $c(a)+u > 0$, which implies that (by equation~\eqref{A2}) $\int_{L(a)}^{\bar x} s(x) f(x \mid a) \,dx > 0$. 
This further implies that set  $\{\,x\in (L(a),\, \bar x] \mid s(x)\,f(x\mid a)>0\}$ has a positive measure. Combining that $\frac{f_a(x \mid a)}{f(x \mid a)} - \frac{c^\prime(a)}{c(a)+u} < 0, \forall x \in (L(a), \, \bar x]$, we have 
\[  \int_{L(a)}^{\bar x} s(x) f(x \mid a) \left( \frac{f_a(x \mid a)}{f(x \mid a)} - \frac{c^\prime(a)}{c(a)+u} \right) \,dx   <  0,\]
which contradicts inequality \eqref{lemmaeq}. Therefore, condition \ref{nececondition} must hold. 

\textbf{(ii)} Next, we show that a solution to program~\eqref{RPJ} always exists. Let \((\hat{a}, \hat{u})\) (for example, \((0, u_0)\)) be any element in \(\mathcal{J}\), and denote \(\hat{M} = \mathbb{E}[X \mid \hat{a}] - c(\hat{a}) - \hat{u}\). By Assumption~\ref{ass:interchange}-(ii), \(\lim_{a\to +\infty} \mathbb{E}[X \mid a] - c(a) = -\infty\). Thus, there exists \(\bar{a} > \hat{a}\) such that for any \(a > \bar{a}\), \(\mathbb{E}[X \mid a] - c(a) < \hat{M} + u_0\). Consequently, for any \(u \geq u_0\) and \(a > \bar{a}\), the objective at \((a, u)\) satisfies \(\mathbb{E}[X \mid a] - c(a) - u \leq \mathbb{E}[X \mid a] - c(a) - u_0 < \hat{M}\). 
Therefore, program~\eqref{RPJ} has an optimal solution if and only if there exists an $ (a, u) \in \mathcal{J} \cap \{(a, u) \mid a \leq \bar{a}\}$ that maximizes the objective $\mathbb{E}[X(a) \mid a] - c(a) - u$.
Denote $G(a) = \sup_{x \in (L(a), \bar x]} \left( \frac{f_a(x\mid a)}{f(x\mid a)} \right)$. By Lemma \ref{suppositive}, we have $G(a) > 0$ and we can rewrite 
$\mathcal{J}$ as 
\[
\mathcal{J}
=
\left\{(a,u)\;\middle|\;  a >0,\; u \ge u_0,\;
u
\ge \frac{c'(a)}{G(a)} - c(a)
\right\}
\;\cup\;
\Big\{(0,u)\mid u\ge u_0 \Big\},
\]  
and
\[
\mathcal{J} \cap \{(a, u) \mid a \leq \bar{a}\}
=
\left\{(a,u)\;\middle|\;  a \in (0, \bar a],\; u \ge \max\{u_0,\; \frac{c'(a)}{G(a)} - c(a)\}
\right\}
\;\cup\;
\Big\{(0,u)\mid u\ge u_0 \Big\}.
\]  
Since the objective is strictly decreasing in $u$, maximizing  $\mathbb{E}[X(a) \mid a] - c(a) - u$ over $\mathcal{J} \cap \{(a, u) \mid a \leq \bar{a}\}$ is equivalent to maximizing it over 
\[  \mathcal{T}  =  \left\{(a,u)\;\middle|\;  a \in (0, \bar a],\; u = \max\{u_0,\; \frac{c'(a)}{G(a)} - c(a)\}
\right\}
\;\cup\;
\Big\{(0,u_0) \Big\}, \]
which is equivalent to the following program
\begin{equation}\label{obj}
    \max_{a \in [0, \bar a]} \mathbb{E}[X(a) \mid a] - c(a) - \max\Big\{u_0,\; \frac{c'(a)}{G(a)} - c(a)\Big\} \cdot \mathbf{1}_{\{a>0\}} -u_0 \cdot \mathbf{1}_{\{a=0\}}.
\end{equation}
By Assumption \ref{ass:interchange}-(i), $\mathbb{E}[X(a) \mid a] - c(a)$ is continuous in $a$ over $[0, \bar a]$. Also, by Assumption \ref{ass:interchange}-(iii), $G(a)$ is continuous over $(0, \bar a]$, which implies that $-\max\{u_0,\; \frac{c'(a)}{G(a)} - c(a)\} \cdot \mathbf{1}_{\{a>0\}} -u_0 \cdot \mathbf{1}_{\{a=0\}}$ is continuous in $a$ over $(0, \bar a]$. Since $ -\max\{u_0,\; \frac{c'(a)}{G(a)} - c(a)\} \leq -u_0 $, we conclude that $-\max\{u_0,\; \frac{c'(a)}{G(a)} - c(a)\} \cdot \mathbf{1}_{\{a>0\}} -u_0 \cdot \mathbf{1}_{\{a=0\}}$ is upper semicontinuous on $[0, \bar a]$ and hence the objective of program \eqref{obj} is upper semicontinuous on $[0, \bar a]$.
Since program \eqref{obj} maximizes an upper semicontinuous function over a compact set, by the upper semicontinuous Weierstrass theorem, the optimal solution exists. Hence, based on the above analysis, we conclude that there exists a $(a,u)\in \mathcal{J}$ that maximizes $\mathbb{E}[X(a) \mid a] - c(a) - u$.

\textbf{(iii)} Let \((a^{\mathcal{J}}, u^{\mathcal{J}})\) denote a maximizer of~\eqref{RPJ}. Suppose a contract \(s(x)\) implements \((a^{\mathcal{J}}, u^{\mathcal{J}})\). By definition, \((a^{\mathcal{J}}, u^{\mathcal{J}}) \in \mathcal{I}\). Since \(\mathcal{I} \subseteq \mathcal{J}\), the fact that \((a^{\mathcal{J}}, u^{\mathcal{J}})\) is a maximizer of~\eqref{RPJ} implies that it is also a maximizer of~\eqref{true}, and thus constitutes an optimal implementable pair. Therefore, since \(s(x)\) implements the optimal implementable pair, it is optimal for the \textbf{OP}.
\hfill $\blacksquare$ 
\end{proof} 
~

\begin{proof}{\textbf{Proof of Lemma \protect\ref{optimalqb}}}
~

Suppose \( s_q = b\, \mathbf{1}_{\{x \ge q\}} \) implements \((a, u)\). Given \(q\) and \(b\), the agent's expected utility is \(b(1-F(q\mid a)) - c(a)\). If \(q < L(a)\), then the agent can improve the expected utility by choosing \(a' = L^{-1}(q) < a\) to obtain \(b - c(a') >  b - c(a)\), so the (IC) constraint is violated. Therefore, we must have \(q \geq L(a)\).

If \(q > L(a)\), then the agent’s expected utility \(b(1 - F(q\mid a)) - c(a)\) is differentiable in \(a\). Optimality of \(a\) therefore requires the stationary condition \(-b\,F_a(q\mid a) - c'(a) = 0\). Combining this with the (IR) constraint \(b(1 - F(q\mid a)) = c(a) + u\) yields \(\frac{-F_a(q\mid a)}{1 - F(q\mid a)} = \frac{c'(a)}{c(a) + u}\). Therefore, the quota \(q\) must satisfy either \(q = L(a)\), or \(q > L(a)\) and $\frac{-F_a(q \mid a)}{1 - F(q \mid a)} = \frac{c'(a)}{c(a) + u}$, and the (IR) implies that \(b = \frac{c(a) + u}{1 - F(q \mid a)}\). 
\hfill $\blacksquare$ 
\end{proof}
~

\begin{proof}{\textbf{Proof of Theorem \protect\ref{qboptimalcondition}}}
~

Suppose condition \textbf{(C1)} holds. 
Under the quota-bonus contract
\(
s_{1}(x)
= \bigl(c(a^{\mathcal J}) + u^{\mathcal J}\bigr)\,\mathbf{1}_{\{x \ge L(a^{\mathcal J})\}},
\)
the agent’s expected utility is
\[
\mathbb{E}\bigl[s_{1}(X(a))\bigr] - c(a)
= 
\begin{cases}
c(a^{\mathcal J}) + u^{\mathcal J} - c(a), 
& a \ge a^{\mathcal J},\\[6pt]
\bigl(c(a^{\mathcal J}) + u^{\mathcal J}\bigr)\bigl[1 - F\bigl(L(a^{\mathcal J})\mid a\bigr)\bigr] - c(a),
& a < a^{\mathcal J}.
\end{cases}
\]
Hence, for \(a\ge a^{\mathcal J}\) the agent’s expected utility $c(a^{\mathcal J})+u^{\mathcal J}-c(a)$
is strictly decreasing in \(a\), and for \(a<a^{\mathcal J}\) condition~\textbf{(C1)} gives
\(
(c(a^{\mathcal J})+u^{\mathcal J})\bigl[1-F(L(a^{\mathcal J})\mid a)\bigr]-c(a)
\le u^{\mathcal J}.
\)
In both cases the unique best response is \(a=a^{\mathcal J}\), yielding utility \(u^{\mathcal J}\).  Hence \(s_1\) implements \((a^{\mathcal J},u^{\mathcal J})\), and by Theorem~\ref{nececondimp} there exists a quota-bonus contract that optimally solves the \textbf{OP}.

Suppose condition \textbf{(C2)} holds. Then there exists \(q>L(a^{\mathcal J})\) satisfying 
\(
\frac{-F_a(q\mid a^{\mathcal J})}{1-F(q\mid a^{\mathcal J})}
=
\frac{c'(a^{\mathcal J})}{c(a^{\mathcal J})+u}
\)
such that 
\(
\frac{1-F(q\mid a)}{1-F\!(q\mid a^{\mathcal J})}
\le
\frac{c(a)+u^{\mathcal J}}{c(a^{\mathcal J})+u^{\mathcal J}},
\forall a.
\)
Consider the quota-bonus contract $s_{2}(x)
= \frac{c(a^{\mathcal J}) + u^{\mathcal J}}{1-F\!(q\mid a^{\mathcal J}) }\,\mathbf{1}_{\{x \ge q\}}$. The agent's expected utility under $s_2(x)$ is then given by 
\[
\mathbb{E}\bigl[s_{2}(X(a))\bigr] - c(a)
= \frac{c(a^{\mathcal J}) + u^{\mathcal J}}{1-F\!(q\mid a^{\mathcal J}) }\,(1-F(q\mid a)) - c(a) \leq u^{\mathcal J}.
\]
The inequality binds when the agent chooses $a = a^{\mathcal J}$. Therefore, $s_2$ implements \((a^{\mathcal J},u^{\mathcal J})\), and by Theorem~\ref{nececondimp} there exists a quota-bonus contract that optimally solves the \textbf{OP}.
\hfill $\blacksquare$  
\end{proof}

\begin{proof}{\textbf{Proof of Theorem \protect\ref{weakerthanFOAVQB}}}
~

Let $s_r = b \cdot \mathbf{1}_{x\geq q}$ be a quota-bonus contract that solves the \textbf{RP} and also solves the \textbf{OP}. Let $a_r$ be the agent's optimal effort level under $s_r$, and $u_r = E^A(a_r, s_r)$ be the resulting agent's expected utility. By definition, $s_r$ implements $(a_r, u_r)$. 
Also, \(s_r,a_r\) must satisfy RIC and IR, which is given by 
\[ -bF_a(q\mid a_r) = c^\prime(a_r) \quad \text{ and }  \quad   b (1-F(q\mid a_r)) = c(a_r) + u_r. \] 
It follows that $q$ must satisfy $q> L(a_r)$. Otherwise, if $q \leq L(a_r)$, then $\frac{\partial^{+}}{\partial a} F(q\mid a_r) = 0$, which would violate the first equality above (i.e., RIC) since $c^\prime(a_r)>0$. Combining RIC and IR, we obtain $\frac{-F_a(q\mid a_r)}{1-F(q\mid a_r)}  = \frac{c^\prime(a_r)}{c(a_r) + u_r}$, and $b = \frac{c(a_r) + u_r}{1-F(q\mid a_r)}$. Since $s_r$ also solves the \textbf{OP}, it satisfies IC, i.e., \(
b (1-F(q\mid a)) - c(a)
\le
b (1-F(q\mid a_r)) - c(a_r) = u_r,
\ \ \forall a\geq 0
\), which is equivalent to    \(
\frac{1-F(q\mid a)}{1-F\!(q\mid a_r)}
\le
\frac{c(a)+u_r}{c(a_r)+u_r},
\ \ \forall a\geq 0.
\)
Now, if $a_r$ is a solution to program \eqref{RPJ}, then by definition, condition \textbf{(C2)} holds. Hence, it remains to show that $(a_r, u_r)$ is a solution to program \eqref{RPJ}. 

We proceed by contradiction. Suppose instead that $(a_r, u_r)$ is not a solution to program \eqref{RPJ}. Then to show the contradiction, it is sufficient to show that either $s_r$ does not optimally solve \textbf{RP} or $s_r$ does not optimally solve \textbf{OP}. To show this, we will construct another contract that yields a higher principal's expected payoff than $s_r$ yields in the \textbf{RP} or the \textbf{OP}.

Recall that program \eqref{RPJ} is $\max_{(a, u) \in \mathcal{J}} \left\{ \mathbb{E}[X(a) \mid a] - c(a) - u \right\}$, where $\mathcal J$ is given in theorem \ref{nececondimp} and the objective is exactly the principal's expected payoff given $(a, u)$. Let $(a^\mathcal{J}, u^\mathcal{J})$ be a solution to it (theorem \ref{nececondimp} guarantees the existence). 

\textbf{Case 1.} If $a^\mathcal{J} = 0$, then $(0, u^\mathcal{J})$ can be implemented by the contract $s_0 \equiv u^\mathcal{J}$. This implies that $s_r$ does not optimally solve \textbf{OP}: the principal's expected utility under $s_0$ is strictly higher than that under $s_r$ since $(a_r, u_r)$ is not a solution to program \eqref{RPJ}.

\textbf{Case 2.} If $a^\mathcal{J} > 0$, then since $(a_r, u_r)$ is not a solution to program \eqref{RPJ}, there exists $\epsilon>0$ such that $\mathbb{E}[X(a^\mathcal{J}) \mid a^\mathcal{J}] - c(a^\mathcal{J}) - (u^\mathcal{J} + \epsilon) > \mathbb{E}[X(a_r) \mid a_r] - c(a_r) - u_r$. That  $(a^\mathcal{J}, u^\mathcal{J}) \in \mathcal{J}$ implies that $(a^\mathcal{J}, u^\mathcal{J}+ \epsilon) \in \mathcal{J}$ because $u^\mathcal{J}+ \epsilon > u^\mathcal{J} \geq u_0$ and 
\[\sup_{x \in (L(a^\mathcal{J}),\, \bar{x}]} \frac{f_a(x \mid a^\mathcal{J})}{f(x \mid a^\mathcal{J})} \geq \frac{c'(a^\mathcal{J})}{c(a^\mathcal{J}) + u^\mathcal{J}} > \frac{c'(a^\mathcal{J})}{c(a^\mathcal{J}) + u^\mathcal{J} + \epsilon}. \]
We now show $s_r$ does not optimally solve \textbf{RP}. Since $(a^\mathcal{J}, u^\mathcal{J}+ \epsilon)$ gives a higher principal's expected payoff than $(a_r, u_r)$, it remains to show that there exists a contract $s$ such that $s $ and $ a^\mathcal{J}$ satisfy RIC and $E^A(s, a^\mathcal{J}) = u^\mathcal{J} + \epsilon$. 


\textbf{Case 2-(i).} Suppose there exists some $x' \in (L(a^\mathcal{J}),\, \bar{x}]$ such that $\frac{f_a(x' \mid a^\mathcal{J})}{f(x' \mid a^\mathcal{J})} <  \frac{c'(a^\mathcal{J})}{c(a^\mathcal{J}) + u^\mathcal{J} + \epsilon}$.
Because $\sup_{x \in (L(a^\mathcal{J}),\, \bar{x}]} \frac{f_a(x \mid a^\mathcal{J})}{f(x \mid a^\mathcal{J})} > \frac{c'(a^\mathcal{J})}{c(a^\mathcal{J}) + u^\mathcal{J} + \epsilon}$ and $\frac{f_a(x \mid a^\mathcal{J})}{f(x \mid a^\mathcal{J})}$ is  continuous in $x$ on $(L(a^\mathcal{J}),\, \bar{x}]$, there must exist some $\hat{x} \in (L(a^\mathcal{J}),\, \bar{x})$ such that $\frac{f_a(\hat{x} \mid a^\mathcal{J})}{f(\hat{x} \mid a^\mathcal{J})} = \frac{c'(a^\mathcal{J})}{c(a^\mathcal{J}) + u^\mathcal{J} + \epsilon}$.

Now we use the existence of such $\hat{x}$ to construct a contract $s$ such that $s$ and $ a^\mathcal{J}$ satisfy RIC and $E^A(s, a^\mathcal{J}) = u^\mathcal{J} + \epsilon$. Since $\frac{f_a(\hat{x} \mid a^\mathcal{J})}{f(\hat{x} \mid a^\mathcal{J})} = \frac{c'(a^\mathcal{J})}{c(a^\mathcal{J}) + u^\mathcal{J} + \epsilon} < \sup_{x \in (\underline x,\, \bar{x}]} \frac{f_a(x \mid a^\mathcal{J})}{f(x \mid a^\mathcal{J})}$ and $\frac{f_a(x \mid a^\mathcal{J})}{f(x \mid a^\mathcal{J})}$ is continuous in $x$, there exists a constant $M_1 < \sup_{x \in (\underline x,\, \bar{x}]} \frac{f_a(x \mid a^\mathcal{J})}{f(x \mid a^\mathcal{J})}$ such that set \[
A_1 
= \Biggl\{\,
x \;\Bigg|\;  M_1> 
\frac{f_a(x \mid a^{\mathcal{J}})}{f(x \mid a^{\mathcal{J}})} 
> 
\frac{f_a(\hat{x} \mid a^{\mathcal{J}})}{f(\hat{x} \mid a^{\mathcal{J}})} 
\Biggr\} 
\;\cap\; (\underline{x}, \bar{x}] 
\]
is measurable and has strictly positive measure (Lebesgue). Similarly, since $\frac{f_a(\hat{x} \mid a^\mathcal{J})}{f(\hat{x} \mid a^\mathcal{J})}> \frac{f_a(x' \mid a^\mathcal{J})}{f(x' \mid a^\mathcal{J})}$ for some $x' \in (\underline{x}, \bar{x}]$, by continuity, there exists a constant $M_2 >\frac{f_a(x' \mid a^\mathcal{J})}{f(x' \mid a^\mathcal{J})} $ such that set 
\[
A_2 
= \Biggl\{\, 
x \;\Bigg|\; M_2< 
\frac{f_a(x \mid a^{\mathcal{J}})}{f(x \mid a^{\mathcal{J}})} 
< 
\frac{f_a(\hat{x} \mid a^{\mathcal{J}})}{f(\hat{x} \mid a^{\mathcal{J}})} 
\Biggr\} 
\;\cap\; (\underline{x}, \bar{x}] 
\]
is measurable and has strictly positive measure (Lebesgue). We construct the contract $s$ by letting
\[s(x) \;=\;
\begin{cases}
b_1, & x \in  A_1,\\
b_2, & x \in  A_2.
\end{cases}\] 
We now show there exists $b_1, b_2>0$ such that $s$ and $a^\mathcal{J}$ satisfy RIC and $E^A(s, a^\mathcal{J}) = u^\mathcal{J} + \epsilon$. Equivalently, we need to find $b_1, b_2>0$ such that 
\[
b_1 \int_{A_1} f_a(x \mid a^{\mathcal J})\,\mathrm{d}x
\;+\;
b_2 \int_{A_2} f_a(x \mid a^{\mathcal J})\,\mathrm{d}x
\;-\;
c'(a^{\mathcal J}) \;=\; 0.
\]
\[
b_1 \int_{A_1} f(x \mid a^{\mathcal J})\,\mathrm{d}x
\;+\;
b_2 \int_{A_2} f(x \mid a^{\mathcal J})\,\mathrm{d}x
\;-\;
c(a^{\mathcal J}) \;=\; u^{\mathcal J} + \epsilon.
\]

For simplicity, we denote
\(
I_1:=\int_{A_1} f_a(x \mid a^{\mathcal J})\,\mathrm{d}x \),
\(I_2:=\int_{A_2} f_a(x \mid a^{\mathcal J})\,\mathrm{d}x \),
\(J_1:=\int_{A_1} f(x \mid a^{\mathcal J})\,\mathrm{d}x \),
\(J_2:=\int_{A_2} f(x \mid a^{\mathcal J})\,\mathrm{d}x
\), and 
\(
\Delta := I_1 J_2 - I_2 J_1 
\).
The above Linear system gives 
\[
b_1
= \frac{ c'(a^{\mathcal J})\, J_2 - \bigl(c(a^{\mathcal J}) + u^{\mathcal J} + \epsilon\bigr)\, I_2 }{\Delta},
\qquad
b_2
= \frac{ \bigl(c(a^{\mathcal J}) + u^{\mathcal J} + \epsilon\bigr)\, I_1 - c'(a^{\mathcal J})\, J_1 }{\Delta}.
\]
Note that by the definition of $A_1, A_2$ and $\hat{x}$, we have $0<J_1\leq 1$ and $0<J_2\leq 1$ because both $A_1$ and $A_2$ have strictly positive measure and $f$ is a density function; also $|I_1|<\infty$ and $|I_2|<\infty$ because $|f_a(x\mid a^{\mathcal J})|$ is bounded by integrable functions in both $A_1$ and $A_2$.
Hence, it remains to show that $\Delta \neq 0$ and $b_1, b_2>0$. Note that
\[I_1 = \int_{A_1} \frac{f_a(x \mid a^{\mathcal J})}{f(x \mid a^{\mathcal J})} \cdot f(x \mid a^{\mathcal J}) \,\mathrm{d}x >\frac{f_a(\hat{x} \mid a^{\mathcal{J}})}{f(\hat{x} \mid a^{\mathcal{J}})}  \cdot \int_{A_1}  f(x \mid a^{\mathcal J})\,\mathrm{d}x = \frac{c'(a^{\mathcal J})}{c(a^{\mathcal J}) + u^{\mathcal J} + \epsilon} \cdot J_1 , \]
\[I_2 = \int_{A_2} \frac{f_a(x \mid a^{\mathcal J})}{f(x \mid a^{\mathcal J})} \cdot f(x \mid a^{\mathcal J}) \,\mathrm{d}x < \frac{f_a(\hat{x} \mid a^{\mathcal{J}})}{f(\hat{x} \mid a^{\mathcal{J}})}  \cdot \int_{A_2}  f(x \mid a^{\mathcal J})\,\mathrm{d}x = \frac{c'(a^{\mathcal J})}{c(a^{\mathcal J}) + u^{\mathcal J} + \epsilon} \cdot J_2.  \]
The above inequalities imply that $c'(a^{\mathcal J})\, J_2 - \bigl(c(a^{\mathcal J}) + u^{\mathcal J} + \epsilon\bigr)\, I_2 > 0$ and $ \bigl(c(a^{\mathcal J}) + u^{\mathcal J} + \epsilon\bigr)\, I_1 - c'(a^{\mathcal J})\, J_1$, and $\Delta > 0$ because 
\[I_1J_2 > \frac{c'(a^{\mathcal J})}{c(a^{\mathcal J}) + u^{\mathcal J} + \epsilon} \cdot J_1 J_2 > I_2J_1 .\]
This further implies that $\Delta \neq 0$, $b_1>0$ and $b_2>0$. This completes the proof by contradiction.

\textbf{Case 2-(ii).} Suppose there does not exists an $x' \in (L(a^\mathcal{J}),\, \bar{x}]$ such that $\frac{f_a(x' \mid a^\mathcal{J})}{f(x' \mid a^\mathcal{J})} <  \frac{c'(a^\mathcal{J})}{c(a^\mathcal{J}) + u^\mathcal{J} + \epsilon}$. Then for any $x \in  (L(a^\mathcal{J}),\, \bar{x}]$, we have $\frac{f_a(x \mid a^\mathcal{J})}{f(x \mid a^\mathcal{J})} \geq  \frac{c'(a^\mathcal{J})}{c(a^\mathcal{J}) + u^\mathcal{J} + \epsilon} >0$, which implies that $f_a(x \mid a^\mathcal{J})>0, \forall x \in  (L(a^\mathcal{J}),\, \bar{x}]$. We claim that we must have $f(L(a^{\mathcal J})\mid a^{\mathcal J})\,L'(a^{\mathcal J})>0$. Suppose to the contrary that \(f(L(a^{\mathcal J})\mid a^{\mathcal J})\,L'(a^{\mathcal J})= 0\). Then \(\int_{L(a)}^{\bar x} f(x\mid a)\,dx=1\) for all \(a>0\) implies that \(\int_{L(a^{\mathcal J})}^{\bar x} f_a(x\mid a^{\mathcal J})\,dx = f(L(a^{\mathcal J})\mid a^{\mathcal J})\,L'(a^{\mathcal J}) = 0\). This contradicts $f_a(x \mid a^\mathcal{J})>0, \forall x \in  (L(a^\mathcal{J}),\, \bar{x}]$.

Now we use $f(L(a^{\mathcal J})\mid a^{\mathcal J})\,L'(a^{\mathcal J})>0$ to show there exists a contract $s$ such that $s$ and $ a^\mathcal{J}$ satisfy RIC and $E^A(s, a^\mathcal{J}) = u^\mathcal{J} + \epsilon$. First, because $L$ is continuously differentiable on $(0,\infty)$, we can find $a_2> a_1>0$ such that $a^\mathcal{J} \in (a_1, a_2)$ and $L'(a) >0$ for any $a \in (a_1, a_2)$. Hence, $L(a_1)<L(a^\mathcal{J})<L(a_2)$. 
We construct the contract $s$ as follows
\[s(x) \;=\;
\begin{cases}
b_1, & x \in  [L(a_1), L(a_2)),\\
b_2, & x \in  [L(a_2), +\infty).
\end{cases}\] 
We now show there exists $b_1, b_2>0$ such that $s$ and $a^\mathcal{J}$ satisfy RIC and $E^A(s, a^\mathcal{J}) = u^\mathcal{J} + \epsilon$. Equivalently, we need to find $b_1, b_2>0$ such that 
\[
E^A(a^{\mathcal J}, s) =  b_1 \int_{L(a^{\mathcal J})}^{L(a_2)} f(x \mid a^{\mathcal J})\,\mathrm{d}x
\;+\;
b_2 \int_{L(a_2)}^{\infty} f(x \mid a^{\mathcal J})\,\mathrm{d}x
\;-\;
c(a^{\mathcal J}) \;=\; u^{\mathcal J} + \epsilon,
\]
and $\frac{\partial }{\partial a}E^A(a^{\mathcal J}, s) = 0$, which is equivalent to
\[ 
b_1 \int_{L(a^{\mathcal J})}^{L(a_2)} f_a(x \mid a^{\mathcal J})\,\mathrm{d}x
\;+\;
b_2 \int_{L(a_2)}^{\infty} f_a(x \mid a^{\mathcal J})\,\mathrm{d}x
\;-\;  b_1 f(L(a^{\mathcal J})\mid a^{\mathcal J})\,L'(a^{\mathcal J})
-
c'(a^{\mathcal J}) = 0.
\]
For simplicity, we denote \(I_1 := \int_{L(a^{\mathcal J})}^{L(a_2)} f_a(x \mid a^{\mathcal J})\,dx\), \(I_2 := \int_{L(a_2)}^{\infty} f_a(x \mid a^{\mathcal J})\,dx\), \(J_1 := \int_{L(a^{\mathcal J})}^{L(a_2)} f(x \mid a^{\mathcal J})\,dx\), \(J_2 := \int_{L(a_2)}^{\infty} f(x \mid a^{\mathcal J})\,dx\), and \(H := f(L(a^{\mathcal J}) \mid a^{\mathcal J})\,L'(a^{\mathcal J})\).
We thus obtain the linear system \[J_1 b_1 + J_2 b_2 = c(a^{\mathcal J}) + u^{\mathcal J} + \epsilon,\] \[(I_1 - H) b_1 + I_2 b_2 = c'(a^{\mathcal J}).\]
Solving this system yields
\[b_1 = \dfrac{(c(a^{\mathcal J})+u^{\mathcal J}+\epsilon)\,I_2 - c'(a^{\mathcal J})\,J_2}{\,J_1 I_2 - J_2(I_1 - H)\,}, \]
\[b_2 = \dfrac{J_1 c'(a^{\mathcal J}) - (I_1 - H)\,(c(a^{\mathcal J})+u^{\mathcal J}+\epsilon)}{\,J_1 I_2 - J_2(I_1 - H)\,}.\]
It remains to show that $b_1, b_2>0$, $I_1,I_2<+\infty$, and $J_1 I_2 - J_2(I_1 - H) \neq 0$. Note that we have already shown $f_a(x \mid a^\mathcal{J})>0, \forall x \in  (L(a^\mathcal{J}),\, \bar{x}]$, which means $I_1, I_2>0$, and by definition $J_1, J_2, H >0$. Also, since $\frac{f_a(x \mid a^\mathcal{J})}{f(x \mid a^\mathcal{J})} \geq   \frac{c'(a^\mathcal{J})}{c(a^\mathcal{J}) + u^\mathcal{J} + \epsilon}, \forall x \in  (L(a^\mathcal{J}),\, \bar{x}]$, we have 
\[I_2 = \int_{L(a_2)}^{\infty} \frac{f_a(x \mid a^{\mathcal J})}{f(x \mid a^{\mathcal J})}\, f(x \mid a^{\mathcal J})\,\mathrm{d}x \;>\;  \frac{c'(a^{\mathcal J})}{c(a^{\mathcal J}) + u^{\mathcal J} + \epsilon}\, J_2.
\] This implies that the numerator of $b_1$, $(c(a^{\mathcal J})+u^{\mathcal J}+\epsilon)\,I_2 - c'(a^{\mathcal J})\,J_2>0$.
Also, \(\int_{L(a)}^{\bar x} f(x\mid a)\,dx=1\) for all \(a>0\) implies that \(\int_{L(a^{\mathcal J})}^{\bar x} f_a(x\mid a^{\mathcal J})\,dx = f(L(a^{\mathcal J})\mid a^{\mathcal J})\,L'(a^{\mathcal J})\), which further implies that
\[I_1 < \int_{L(a^{\mathcal J})}^{\bar x}  f_a(x \mid a^{\mathcal J})\,\mathrm{d}x \;=\; f(L(a^{\mathcal J}) \mid a^{\mathcal J})\,L'(a^{\mathcal J}) = H.
\]
This further implies that the numerator of $b_2$, $J_1 c'(a^{\mathcal J}) - (I_1 - H)\,(c(a^{\mathcal J})+u^{\mathcal J}+\epsilon)>0$, and the denominator of both $b_1$ and $b_2$, $J_1 I_2 - J_2(I_1 - H)>0$. By the same argument that establishes $I_1<H$, we also conclude $I_2<H$. Combined, we have demonstrated that $b_1, b_2>0$, $I_1,I_2<+\infty$, and $J_1 I_2 - J_2(I_1 - H) \neq 0$. This completes our proof of Case 2-(ii).

Since Case 2-(i) and Case 2-(ii) are complementary, we have completed the proof of Case 2. Together with Case 1 and the previous arguments, this completes the proof.
\hfill $\blacksquare$  
\end{proof}
\medskip

\end{APPENDICES}

\begin{figure}[htbp]
  \FIGURE
    {\includegraphics[width=0.6\textwidth]{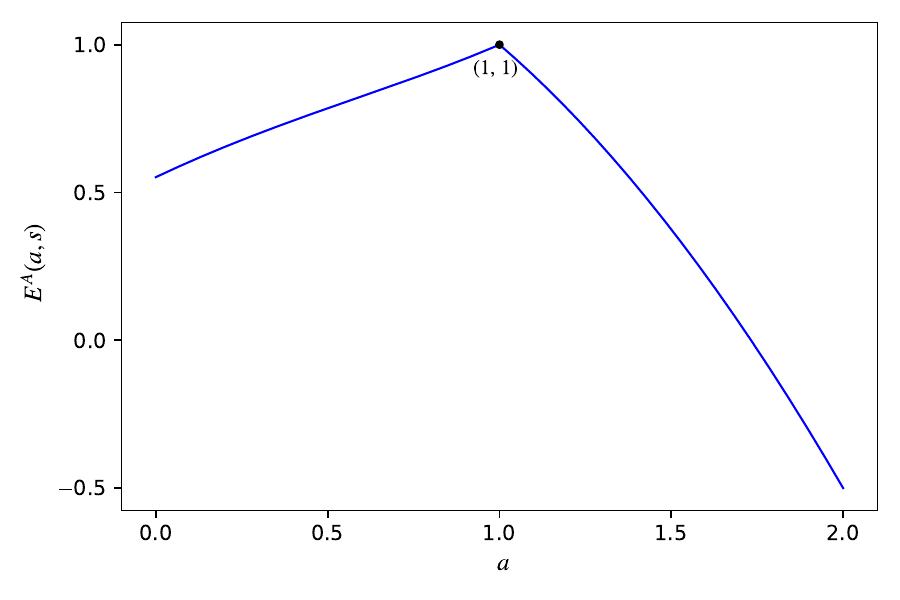}}
    {Agent’s expected utility as a function of effort under contract $\hat{s}(x)$. \label{fig:agent-utility}}
    {The unique maximizer \(a = 1\) is a kink point: the left derivative of \(E^A(a, \hat{s})\) at \(a = 1\) is \(\tfrac{1}{2}\), while the right derivative is \(-1\).
}
\end{figure}


\end{document}